\theoremstyle{plain}
\newtheorem{prob}{Problem}  %
\newtheorem{theorem}{Theorem}[section]  %
\newtheorem{lemma}[theorem]{Lemma}
\newtheorem{cor}[theorem]{Corollary}
\theoremstyle{definition}  %
\newtheorem{definition}[theorem]{Definition}
\crefname{algocf}{Algorithm}{Algorithms}
\Crefname{algocf}{Algorithm}{Algorithms}
\crefname{prob}{Problem}{Problems}
\DeclarePairedDelimiter{\ceil}{\lceil}{\rceil}
\DeclarePairedDelimiter{\floor}{\lfloor}{\rfloor}
\DeclarePairedDelimiter{\bk}{(}{)}
\DeclarePairedDelimiter{\Bk}{[}{]}
\DeclarePairedDelimiter{\BK}{\{}{\}}
\DeclarePairedDelimiter{\abs}{\lvert}{\rvert}
\DeclareMathOperator{\poly}{poly}
\newcommand{\eqdef}{\eqqcolon}
\newcommand{\defeq}{\coloneqq}
\newcommand{\eps}{\varepsilon}
\renewcommand{\l}{\ell}
\renewcommand{\epsilon}{\eps}
\newcommand{\Patrascu}{\textup{P{\v{a}}tra{\c{s}}cu}\xspace}
\newcommand{\numberthis}{\addtocounter{equation}{1}\tag{\theequation}}
\newcommand{\xlow}[1][i]{x_{#1}^{(\textup{low})}}
\newcommand{\xmidhigh}[1][i]{x_{#1}^{(\textup{mid-high})}}
\newcommand{\deltamid}[1][i]{\delta_{#1}^{(\textup{mid})}}
\newcommand{\deltahigh}[1][i]{\delta_{#1}^{(\textup{high})}}
\newcommand{\height}{h}
\newcommand{\rank}{\textup{\textsc{Rank}}\xspace}
\newcommand{\select}{\textup{\textsc{Select}}\xspace}
\newcommand{\partialsum}{\textup{\textsc{PartialSum}}\xspace}
\renewcommand{\vec}[1]{\bm{#1}}
\newcommand{\Lthrd}{L_{\textup{thrd}}}
\newcommand{\vhigh}{v_{\textup{high}}}
\newcommand{\vlow}{v_{\textup{low}}}
\newcommand{\defn}[1]{\emph{\textbf{#1}}}
\xpatchcmd\thmt@restatable{%
\csname #2\@xa\endcsname\ifx\@nx#1\@nx\else[{#1}]\fi
}{%
\ifthmt@thisistheone
\csname #2\@xa\endcsname\ifx\@nx#1\@nx\else[{#1}]\fi
\else
\csname #2\@xa\endcsname[{Restated}]
\fi}{}{}
\title{Optimal Static Fully Indexable Dictionaries}
\author{
Jingxun Liang\thanks{Carnegie Mellon University. \texttt{jingxunl@andrew.cmu.edu}.}
\and
Renfei Zhou\thanks{Carnegie Mellon University. Partially supported by the MongoDB PhD Fellowship. \texttt{renfeiz@andrew.cmu.edu}.}
}
\date{}
\begin{document}

\maketitle

\begin{abstract}
  Fully indexable dictionaries (FID) store sets of integer keys while supporting rank/select queries. They serve as basic building blocks in many succinct data structures. Despite the great importance of FIDs, no known FID is \emph{succinct} with efficient query time when the universe size $U$ is a large polynomial in the number of keys $n$, which is the conventional parameter regime for dictionary problems. In this paper, we design an FID that uses $\log \binom{U}{n} + \frac{n}{(\log U / t)^{\Omega(t)}}$ bits of space, and answers rank/select queries in $O(t + \log \log n)$ time in the worst case, for any parameter $1 \le t \le \log n / \log \log n$, provided $U = n^{1 + \Theta(1)}$. This time-space trade-off matches known lower bounds for FIDs \cite{patrascu2006timespace, patrascu2010cellprobe, viola2023new} when $t \le \log^{0.99} n$.
  
  Our techniques also lead to efficient succinct data structures for the fundamental problem of maintaining $n$ integers each of $\l = \Theta(\log n)$ bits and supporting partial-sum queries, with a trade-off between $O(t)$ query time and $n\l + n / (\log n / t)^{\Omega(t)}$ bits of space. Prior to this work, no known data structure for the partial-sum problem achieves constant query time with $n \l + o(n)$ bits of space usage.
\end{abstract}

\section{Introduction}
\label{sec:introduction}
A \defn{fully indexable dictionary} (a.k.a.~\defn{rank/select dictionary}; \defn{FID} for short) is a fundamental data structure, which stores a set $S$ of $n$ keys from a universe $[U]$, supporting \rank and \select queries:
\begin{itemize}
\item $\rank(x)$: return the number of keys in $S$ that are smaller than or equal to $x$.
\item $\select(i)$: return the $i$-th smallest key in $S$.
\end{itemize}
This paper focuses on \defn{static} FIDs, where the key set is given at the beginning and does not change over time. There is also the dynamic case, allowing updates to the key set via insertions and deletions, which is not considered in this paper.

FIDs are powerful data structures with numerous applications. First, if we choose to think of $S$ as an indicator vector in $\BK{0, 1}^{U}$, then the problem becomes storing a bit string containing $n$ ones and $U - n$ zeros, while supporting prefix-sum queries and queries for the position of the $i$-th one. This so-called \defn{rank/select} problem serves as a subroutine in many space-efficient data structures. Second, given \rank and \select, one can also support \defn{predecessor search} queries (i.e.,~find the largest element in $S$ that does not exceed $x$) by calling $\select(\rank(x))$---this, too, serves as a common data-structural subroutine. The result is that FIDs have many applications in data structures for strings \cite{clark1996efficient, ferragina2005indexing, grossi2005compressed, hon2009breaking, munro2001space, navarro2007compressed, grossi2003highorder, ferragina2008searching}, trees \cite{ferragina2005structuring, geary2006succinct, raman2007succinct, munro2001succinct}, parentheses sequences \cite{geary2006simple, munro2001succinct}, multisets \cite{raman2007succinct}, permutations \cite{munro2012succinct}, variations of dictionaries \cite{buhrman2002are, blandford2008compact}, etc.

The most fundamental question regarding FIDs is to determine their best possible time-space trade-off. If an FID storing $n$ keys from the universe $[U]$ uses $\log \binom{U}{n} + R$ bits of space, we say it incurs $R$ bits of \defn{redundancy}, where the first term $\log \binom{U}{n}$ is referred to as the \emph{information-theoretic optimum}.
Conventionally, an FID is said to be \defn{compact} if the redundancy $R = O\bk[\big]{\log \binom{U}{n}}$, and is said to be \defn{succinct} if $R = o\bk[\big]{\log \binom{U}{n}}$.
The \defn{time-space trade-off} of an FID is the relationship between the redundancy $R$ and the \emph{worst-case query time} under a RAM with word-size $\Theta(\log U)$.

\paragraph{Towards the optimal trade-off.}

There is vast literature on space-efficient FIDs. The earliest works \cite{jacobson1988succinct,munro1996tables,clark1996compact} on FIDs focused on the setting where one needs to store $S$'s indicator vector using $U$ bits in the plain format, while using up to $R$ additional bits to store auxiliary information. Compared to the information-theoretic optimum $\log \binom{U}{n}$, their approaches are succinct only if $U = (2 \pm o(1)) n$.
In 2002, Raman, Raman, and Satti \cite{raman2007succinct} constructed an FID with redundancy $O(U \log \log U / \log U)$ and constant query time. Later, \Patrascu \cite{patrascu2008succincter} reduced the redundancy to $U / (\log U / t)^{\Omega(t)}+ O(U^{3/4})$ bits with worst-case query time $O(t)$.

Some applications of FIDs demand to store dense sets where the universe size $U = O(n)$; in this case, \Patrascu's FID is already succinct and achieves an ideal trade-off between $O(t)$ query time and $n / (\log n / t)^{\Omega(t)}$ bits of redundancy, which is provably optimal when $t \le \log^{0.99} n$ \cite{patrascu2010cellprobe, viola2023new}. However, when $U = n^{1 + \Theta(1)}$, which is the conventional parameter regime for dictionary problems, the redundancy of \Patrascu's FID will be polynomially larger than the information-theoretic optimum $\log \binom{U}{n} = \Theta(n \log n)$ even for large running times, e.g., $t = \log^{0.99} n$, and thus is not \emph{succinct} or \emph{compact}.
Gupta et al.~\cite{gupta2007compressed} further improved the redundancy to $O(n \log \log n)$ bits, thus being succinct, at the cost of slower queries of $O(\log^2 \log n)$ time. It is currently not known how to construct succinct FIDs with query time $o(\log^2 \log n)$.

On the lower-bound side, \Patrascu and Thorup \cite{patrascu2006timespace} showed an $\Omega(\log \log n)$ time lower bound for \emph{predecessor} queries, assuming that $U = n^{1 + \Theta(1)}$ and that the data structure is compact. Their lower bound also applies to FIDs by a reduction. Another lower bound proven by \Patrascu and Viola \cite{patrascu2010cellprobe, viola2023new} shows that, if an FID takes $O(t)$ worst-case time per query, it must incur $n / (\log n)^{O(t)}$ bits of redundancy.
The best-known lower bound for FIDs is a simple combination of these two independent lower bounds: When the redundancy equals $n / (\log n)^{\Omega(t)}$, the worst-case query time must be at least $\Omega(\max\BK{\log \log n, \, t})$, provided $U = n^{1 + \Theta(1)}$. There remains a huge gap between the lower and upper bounds.

In summary, it has remained one of the basic open questions in the field whether one can hope to construct an FID that is both succinct and that supports queries in the optimal time of $O(\log \log n)$. Moreover, although there are well-established lower bounds for the time-space trade-off of a static FID, it remains open on the upper-bound side to obtain tight bounds \emph{at any point along the curve}. These are the problems that we seek to resolve in the current paper. 

\paragraph{This paper: Tight upper bounds for FIDs.}

In this paper, we construct a succinct FID as shown in the following theorem.

\begin{restatable}{theorem}{thmMainFID}
  \label{thm:main-fid}
  For any parameters $n, U, t$ with $U = n^{1 + \Theta(1)}$ and $t \le \log n/\log\log n$, there is a static fully indexable dictionary with query time $O(t + \log \log n)$ and redundancy $R = n/(\log n/t)^{\Omega(t)}$, in word RAM with word size $w = \Theta(\log n)$.
\end{restatable}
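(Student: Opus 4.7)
My plan is to build \cref{thm:main-fid} on top of the paper's succinct partial-sum data structure via an Elias--Fano-style decomposition, separating concerns into a \emph{top} structure for bucket location and \emph{bottom} structures for within-bucket queries.

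First I would set $h = \lfloor \log(U/n) \rfloor = \Theta(\log n)$, split each key in $S$ into $\log n$ high bits and $h$ low bits, and group the keys by high-bit value into $n$ buckets with multiplicities $(m_0, \ldots, m_{n-1})$ satisfying $\sum_v m_v = n$. The top structure would store $(m_v)$---viewed equivalently as a dense bit string of length $2n$ with $n$ ones, or as an integer sequence summing to $n$---so as to support the forward partial-sum query $\mathrm{PS}(v) = \sum_{v' \le v} m_{v'}$ in $O(t)$ time and the inverse query ``smallest $v$ with $\mathrm{PS}(v) \ge i$'' in $O(t + \log\log n)$ time, with redundancy $n/(\log n/t)^{\Omega(t)}$. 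A $\rank(x)$ query then becomes (forward PS on the high-bit value of $x$) $+$ (local rank in the bucket); a $\select(i)$ query becomes (inverse PS on $i$) $+$ (local select in the bucket). The $O(\log\log n)$ term in the theorem comes solely from the inverse query and is forced by the predecessor lower bound of \cite{patrascu2006timespace}.

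For the bottom, within each bucket $v$ I would store its $m_v$ sorted low-bit values in a variable-length encoding of about $\log \binom{2^h}{m_v}$ bits. Summed over buckets and combined with the top's $\log\binom{2n-1}{n-1}$ bits of information content, this matches $\log \binom{U}{n}$ up to lower-order terms via the standard Elias--Fano accounting. Buckets with $m_v \le \log^{O(1)} n$ hold almost all keys and can be served by a universal $\poly(n)$-bit lookup table giving $O(1)$-time in-bucket rank/select; the few heavy buckets (at most $n/\log^{O(1)} n$ of them) would fall back to a dedicated smaller FID, possibly via recursion on the universe $[2^h]$, and contribute only lower-order redundancy.

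\textbf{Main obstacle.} The difficult step is the top structure: achieving both query directions and the near-optimal redundancy simultaneously. Applying \Patrascu's dense-FID techniques directly to the multiplicity bit vector leaves $\Theta(n)$ redundancy---exactly the barrier that the prior succinct FID of Gupta et al.~\cite{gupta2007compressed} paid $O(\log^2\log n)$ query time to overcome. To hit the $(O(t+\log\log n),\, n/(\log n/t)^{\Omega(t)})$ trade-off I expect the top layer must be designed from scratch as a \Patrascu-style recursive spillover representation with two new features: (a) the inverse query is predecessor-accelerated at the root so that the $\log\log n$ term appears only once across the $O(1)$ recursion levels, and (b) the spillover interface is rich enough to transport partial-sum information (not only bit counts) down the recursion. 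These are exactly the technical ingredients that the paper's partial-sum construction is set up to deliver.
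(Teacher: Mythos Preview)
Your Elias--Fano plan is a natural first attempt but it is not the paper's route, and the obstacle you flag is a real gap you do not close. The paper does \emph{not} bucket by value. It partitions the sorted keys by \emph{rank} into blocks of $B^t$ consecutive keys and stores an inter-block predecessor structure on the $n/B^t$ endpoints, costing only $O(nw/B^t)\le R$ bits---so there is no separate ``top multiplicity vector'' whose $\approx 2n$ bits must be absorbed. Inside a block it writes the \emph{difference sequence} $\delta_i$ of the top $\log n + h$ bits, puts the lowest $2h$ bits of each $\delta_i$ into an aB-tree labelled by subtree sums (compressed via \cref{lm:succincter} to $\log\binom{\Delta+B^t}{B^t-1}+O(1)$ bits; the Vandermonde bound over blocks is what makes the total match $\log\binom{U}{n}$), and stores the sparse nonzero ``high'' remainders in a tiny predecessor structure. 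The only slow step is then the rank query in the low-$b$-bit array. The fix is not a new spillover scheme: on each maximal interval where the $\delta_i$ vanish, the low parts $(\xlow)$ are themselves a sorted set of $L$ elements in $[2^b]$, and when $L>\log U$ this sub-FID is stored using the \emph{basic} FID of \cref{thm:simple_FID} with $t'=O(1)$, which takes $O(\log\log n)$ time and fits in $\log\binom{2^b}{L}\le Lb$ bits, i.e.\ exactly the space already allocated to those entries. That one-level, non-recursive embedding of the basic structure inside itself is the key idea.

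Two concrete failures in your outline. First, with $h=\Theta(\log n)$ low bits per key, a bucket with even $m_v=2$ elements already carries $\Theta(\log n)$ bits of content, so a ``universal $\poly(n)$-bit lookup table'' for buckets with $m_v\le\log^{O(1)}n$ would need $n^{\poly\log n}$ entries; only $m_v=O(1)$ admits table lookup, and in the worst case one bucket holds all $n$ keys, so in-bucket rank is the full predecessor problem and is where the $\log\log n$ term actually enters (not in the top's inverse query, which a dense FID handles in $O(t)$). Second, the partial-sum theorem you plan to invoke is a \emph{corollary} of the FID construction, not an ingredient: it stores $n$ arbitrary $\ell$-bit integers in $n\ell+R$ bits and does nothing to compress your multiplicity vector below its $\log\binom{2n-1}{n-1}\approx 2n$ bits. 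For inputs with all $m_v=1$ your separate top$+$bottom encoding uses $\approx nh+2n$ bits against $\log\binom{U}{n}\approx nh+n\log e$, a $\Theta(n)$ gap that no black-box call removes; you would need a genuinely new joint spillover encoding, which is precisely what the paper sidesteps by partitioning by rank.
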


Thus, it is possible to achieve succinctness while offering an optimal time bound of $t = O(\log \log n)$---in this case, the redundancy of our construction is $n / (\log n)^{\Omega(\log \log n)} \ll n / \poly \log n$ bits. More generally, the time-space trade-off offered by the above theorem is provably optimal \emph{for all} $t \le \log^{0.99} n$, as, in every parameter regime, it matches one of the two known lower bounds. Somewhat surprisingly, this means that the maximum of the two completely independent lower bounds forms a single tight lower bound for FIDs.

\paragraph{High-level technical approach.} To understand the high-level approach taken by our data structure, let us consider the lower bound in \cite{patrascu2006timespace}, which points out a distribution of hard instances such that any data structure with near-linear space $O(n \poly \log n)$ needs to spend $\Omega(\log \log n)$ time for each query on these inputs.

A critical insight in the current paper is that, although it is difficult to improve the query time for these hard instances, they are well-structured so that they only occupy a small fraction of all possible inputs. Therefore, in principle, the space we actually need to store these hard instances is small.

Moreover, if we restrict only to these hard instances, then we can afford to use a data structure that is space-efficient compared to the optimal bound for FIDs but that is \emph{space-inefficient} compared to the information-theoretic optimum for hard instances. So we can handle hard instances by creating data structures that are morally space inefficient (i.e., not optimal for these hard instances), but that are nonetheless space efficient in the context of the overall FID problem.

On the flip side, when an input is not hard, we can think of it morally as being like a ``random'' input. The entropy of random inputs is large, so we cannot waste much space. Fortunately, random instances can benefit from the same high-level techniques that have already been developed in past work \cite{patrascu2008succincter}, allowing for fast queries with good space efficiency. 

Of course, ``hard instances'' and ``random instances'' are really just two extremes in a large design space of inputs. What is remarkable is that, nonetheless, it is possible to combine the high-level approaches described above in order to construct a single data structure that achieves the optimal time-space trade-off for all inputs. 

\paragraph{Other implications of our techniques.}

As mentioned earlier, FIDs can naturally support predecessor search queries, so our result also improves the predecessor data structures:

\begin{cor}
    \label{cor:predecessor}
    For any parameters $n, U$ with $U = n^{1+\Theta(1)}$, there is a data structure storing a set $S$ of $n$ keys from a universe $[U]$ that supports predecessor searches in $S$, with the same time-space trade-off as in \cref{thm:main-fid}.
\end{cor}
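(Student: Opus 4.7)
The plan is to obtain \cref{cor:predecessor} as an immediate consequence of \cref{thm:main-fid} via the standard reduction from predecessor search to \rank/\select, which the introduction has already alluded to. Concretely, I would instantiate the FID of \cref{thm:main-fid} on the given set $S \subseteq [U]$ with the chosen parameter $t \le \log n/\log\log n$, and add no further auxiliary structure whatsoever. The space usage of the predecessor data structure is then exactly the space of the underlying FID, i.e.\ $\log \binom{U}{n} + n/(\log n/t)^{\Omega(t)}$ bits, matching the claimed trade-off.

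To answer a predecessor query on input $x \in [U]$, I would compute $r \defeq \rank(x)$, which returns the number of keys in $S$ that are $\le x$. If $r = 0$, there is no predecessor (one can report this in $O(1)$ time from $r$ alone). Otherwise, I would return $\select(r)$; by definition of \rank and \select, this is the largest key of $S$ that does not exceed $x$, i.e.\ the predecessor. The whole query performs at most one \rank call followed by at most one \select call, so its worst-case running time is $O(t + \log\log n) + O(t + \log\log n) = O(t + \log\log n)$, matching the bound in \cref{thm:main-fid}.

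There is essentially no obstacle here: correctness of the two-query reduction is textbook, and the space and time overheads are both additive constants over those already established for the FID. The only things to double-check are that the reduction is valid under the paper's convention that $\rank(x)$ counts keys $\le x$ (so that $\select(\rank(x))$ indeed returns the predecessor as defined, namely the largest element not exceeding $x$), and that both sub-queries operate in the same word-RAM model with $w = \Theta(\log n)$ assumed by \cref{thm:main-fid}; both are immediate.
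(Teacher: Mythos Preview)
Your proposal is correct and matches the paper's own approach: the corollary is stated without a separate proof, relying precisely on the $\select(\rank(x))$ reduction that the introduction already spells out. Your handling of the $r=0$ edge case and the time/space accounting are fine and add nothing beyond what the paper intends.
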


Prior to this result, it remained open whether one could achieve $O(\log \log n)$ query time while also offering $o(n)$ bits of redundancy \cite{pibiri2017dynamic}.

\smallskip

By adjusting our data structure for FIDs, we show the following time-space trade-off for so-called \defn{select dictionaries}, which are dictionaries that support \select but not \rank:

\begin{restatable}{theorem}{thmSelect}
  \label{thm:select}
  For any parameters $n, U, t$ with $U = n^{1 + \Theta(1)}$ and $t \le \log n / \log \log n$, there is a static dictionary that answers \textup{\select} queries within $O(t)$ time and has redundancy $R = n / (\log n / t)^{\Omega(t)}$, in word RAM with word size $w = \Theta(\log n)$.
\end{restatable}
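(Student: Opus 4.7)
The plan is to adapt the FID construction from \cref{thm:main-fid} so that the additive $\log \log n$ in the query time vanishes. In that construction, the $\log \log n$ term enters exclusively through the \rank query: to answer $\rank(x)$, the data structure must first locate the top-level block (indexed by universe position) that contains $x$, and this is essentially a predecessor search, subject to the $\Omega(\log \log n)$ lower bound of \cite{patrascu2006timespace}. A \select query, by contrast, receives the rank $i$ up front, so if blocks are organized by \emph{rank} rather than by \emph{universe position}, then the block containing the $i$-th key is determined by a single arithmetic computation in $O(1)$ time.

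First I would replace the universe-based top-level partition of \cref{thm:main-fid} with a rank-based partition: fix a block size $B$ and group the $n$ keys, in sorted order, into $n/B$ blocks of exactly $B$ keys each. The block containing the $i$-th key is then just $\lceil i / B \rceil$. For each block I would store (i) the universe position of its first key (its \emph{entry offset}), and (ii) a succinct encoding of the within-block key pattern supporting \select, using exactly the spillover/redundancy-compression ideas from the main FID, but stripping out the predecessor-search machinery that \rank required.

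Next I would recurse inside each block with the same rank-based partitioning, so that after $O(t)$ levels the block size shrinks below a polylogarithmic threshold, at which point each bottom-level block can be resolved by a single $O(1)$-time table lookup into a precomputed table of total size $n^{o(1)}$. Each level contributes only $O(1)$ work---a constant-time arithmetic step to pick a sub-block, and one constant-time read of its entry offset---so the overall query time is $O(t)$ with no additive $\log \log n$.

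The main obstacle is the space analysis. The spillover encoding of \cref{thm:main-fid} was tuned to handle both universe- and rank-based partitioning simultaneously; I would need to rebalance the per-level parameters so that the redundancy still telescopes to $n/(\log n/t)^{\Omega(t)}$ when only rank-based partitioning is used. The delicate point is that the entry offsets at each level live in a universe of size comparable to $U$, so naively storing one offset per block would dominate the redundancy; I would have to reuse the offset-compression trick from \cref{thm:main-fid} (storing only high-order bits explicitly and recovering the rest from the recursive substructure) so that the entry offsets contribute only a lower-order term to the total space budget.
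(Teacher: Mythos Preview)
Your diagnosis of the $\log\log n$ bottleneck is off, and this sends the proposal in the wrong direction. The top-level partition in \cref{thm:main-fid} (and already in the basic construction of \cref{sec:simple}) is \emph{already} rank-based: keys are grouped into blocks of exactly $B^t$ consecutive keys, and for \select the block index is $\lceil i/B^t\rceil$, computed in $O(1)$ time. So the step you propose to replace is not the bottleneck at all. The actual $\log\log n$ cost for \select comes from the \emph{high part} of the intra-block structure: computing $\deltahigh[\le i]$ requires a predecessor query (\cref{lm:predecessor}) on the sparse set of indices with nonzero $\deltahigh$, and that query costs $O(\log\log U)$. Your claim that the $\log\log n$ ``enters exclusively through the \rank query'' is simply false.

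The paper's fix is a single targeted change to the \emph{basic} construction of \cref{sec:simple} (where the low part for \select is already an $O(1)$ array read). Across all blocks, the indices with nonzero $\deltahigh$ number at most $n/2^h = n/B^t$, so globally they form a set of at least $n/\log^t n$ elements inside the universe $[1,n]$; a single dense predecessor structure (\cref{lm:pred-dense}) over this set answers the needed query in $O(\log t)$ time rather than $O(\log\log n)$, and its space $O((n/B^t)\log n)$ fits inside the redundancy budget. Everything else is left untouched, and \select now runs in $O(t)$. Your recursive rank-based scheme does not touch the high-part predecessor query at all, and the space analysis you gesture at (an ``offset-compression trick from \cref{thm:main-fid}'') is too vague to evaluate---the paper has no such general trick, only the specific three-part decomposition and aB-tree encoding whose space bound was already worked out in \cref{sec:simple}.
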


Here, again, the bounds that we achieve are provably optimal, as they match a lower bound previously proven by \cite{patrascu2010cellprobe}. This represents a significant improvement over the previous state of the art \cite{raman2007succinct}, which was unable to achieve space bounds any better than $O(n / \sqrt{\log n})$.

\smallskip

Another application of our techniques is to the basic question of storing $n$ $\Theta(\log n)$-bit integers and answering partial-sum queries.

\begin{restatable}{theorem}{thmPartialSum}
  \label{thm:partialsum}
  For any parameters $n, \l, t$ with $\l = \Theta(\log n)$ and $t \le \log n / \log \log n$, there is a data structure storing a sequence of $n$ $\l$-bit integers $a_1, \ldots, a_n$, which uses $n\l + n / (\log n / t)^{\Omega(t)}$ bits of space and can answer the partial-sum queries $\sum_{j=1}^{i} a_j$ for any given $i$ within $O(t)$ time.
\end{restatable}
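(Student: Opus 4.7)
The plan is to reduce partial-sum to \select on a strictly increasing sequence, and then refine the construction of Theorem~\ref{thm:select} to avoid the $\Theta(n)$-bit slack that would come from naively applying it. Concretely, define $T_i \defeq S_i + i$ where $S_i = \sum_{j=1}^i a_j$. Since $T_i - T_{i-1} = a_i + 1 \ge 1$, the $T_i$'s form a strictly increasing sequence in $[1, U]$ with $U = n(2^\ell+1) = n^{1+\Theta(1)}$, and the partial-sum query $S_i$ is recovered in $O(1)$ time as $\select(i) - i$. Plugging the sequence into Theorem~\ref{thm:select} gives the right $O(t)$ query time, but it spends $\log\binom{U}{n} = n\ell + \Theta(n)$ bits on the base representation, overshooting the $n\ell$-bit budget by a $\Theta(n)$ additive term.

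To close this gap I would store the $a_i$'s themselves as a plain packed array of exactly $n\ell$ bits, which is an equivalent but strictly tighter encoding of the same information (a generic FID/\select base encoding accounts for $\binom{U}{n}$ subsets, but only $2^{n\ell}$ of them correspond to valid sequences of $\ell$-bit integers, hence the $\log e \cdot n$ slack). On top of this packed array, I would build the same hierarchy of block sums, checkpoints, and spillover counters used in the proof of Theorem~\ref{thm:select}. Crucially, this auxiliary hierarchy stores only aggregate counters (block sums at various granularities), never the full bitmap; its total bit count is therefore unchanged by our substitution of the base representation, and remains $n/(\log n/t)^{\Omega(t)}$.

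The query algorithm would then mirror the recursive \select procedure: descend $O(t)$ levels of blocks, contributing $O(1)$ work per level from the precomputed checkpoint sums, and finish at the bottom level by computing a partial sum within a block of $B = \Theta(\log n/t)$ packed $\ell$-bit integers in $O(1)$ time. For this bottom-level step, I would pack $O(1)$ words worth of $\ell$-bit integers into registers and compute their prefix sum using the standard multiplication-by-all-ones mask trick (with a final $O(1)$-size lookup table to finish the partial sum inside the last word).

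The main obstacle I foresee is the bottom-level adaptation: Theorem~\ref{thm:select}'s internal routine at the leaves operates on a bitmap and uses a small \select lookup table, whereas our leaves are packed sequences of $\ell$-bit gaps whose prefix sums can themselves have up to $\ell + \log B$ bits. Verifying that the word-level addition-with-carry chain still runs in $O(1)$ time when $\ell$ is a constant fraction of $w$ requires treating the high-order overflow bits of each block separately, and checking that summarizing them across blocks fits inside the same checkpoint hierarchy (without inflating its bit count). Once this is verified, the total space is $n\ell + n/(\log n/t)^{\Omega(t)}$ and the query time is $O(t)$, exactly matching Theorem~\ref{thm:partialsum}.
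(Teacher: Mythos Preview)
Your reduction to \select is natural, and you correctly identify the $\Theta(n)$-bit slack between $\log\binom{U}{n}$ and $n\ell$. But the plan to close it has two concrete gaps.

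First, the ``hierarchy of block sums, checkpoints, and spillover counters'' you attribute to \cref{thm:select} is not what that proof contains: the space-critical component is the mid part, stored as \emph{aB-trees} compressed via \cref{lm:succincter}, and that compression is precisely what pushes the redundancy down to $n/(\log n/t)^{\Omega(t)}$. A plain multilevel checkpoint hierarchy already spends $\Theta(n\log n/B)=\Theta(nt)$ bits on its finest level alone, far over budget. Second, and more fatally for your specific plan, the bottom step cannot run in $O(1)$ time: a block of $B=\Theta(\log n/t)$ integers of $\ell=\Theta(\log n)=\Theta(w)$ bits each occupies $\Theta(B)$ words, not $O(1)$ words. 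The multiplication-by-ones trick applies when many sub-word fields share a word, not when each field is itself word-sized; summing them costs $\Theta(B)$ time, making the whole query $\Omega(\sqrt{\log n})$ regardless of $t$. The overflow-bit issue you flag is secondary to this.

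The paper's proof sidesteps both problems by a different base encoding: it stores the low $\ell-h$ bits of the \emph{prefix sums} $x_i=\sum_{j\le i}a_j$ (not of the increments $a_i$) as a plain array of $n(\ell-h)$ bits, so the bottom step is one array read with no summation. The remaining high $h=t\log B$ bits of $x_i$ have a difference sequence with entries $\delta_i\in[0,2^h]$ (bounded because $a_i<2^\ell$; hence no high part is needed at all), small enough that aB-trees of branching factor $B$ and depth $t$ satisfy the hypothesis of \cref{lm:succincter}. Each block then compresses to $B^t h+O(\log n)$ bits, and summing over the $n/B^t$ blocks gives $n(\ell-h)+nh+O(n\log n/B^t)=n\ell+n/(\log n/t)^{\Omega(t)}$. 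The switch from storing increments to storing low bits of prefix sums is the step your proposal is missing.
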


For this partial-sum problem, the previous state of the art with $O(1)$ query times incurred $O(n)$ bits of redundancy \cite{raman2001succinct}. Whether or not this could be reduced to $o(n)$ remained an open question. Our bound reduces it all the way to $n / \poly \log n$ bits for a polylogarithm of our choice.

\subsection{Related Works}

Grossi et al.~\cite{grossi2009more} studied the FID problem for polynomial universe sizes with $O(1)$ query time. In this setting, achieving a compact space usage of $O(n \log n)$ bits is impossible due to the lower bound established by \cite{patrascu2006timespace}. They showed a trade-off between $O(\eps^{-1})$ query time and $O(n^{1 + \eps})$ bits of space.

In the special case where the universe size $U = 2n$, Yu~\cite{yu2019optimal} presented a data structure that supports $\rank$ queries in $O(t)$ time, but not $\select$ queries. This data structure incurs only $n / (\log n)^{\Omega(t)} + n^{1 - \Omega(1)}$ bits of redundancy, outperforming \Patrascu's data structure when $t = (\log n)^{1 - o(1)}$ is large, and matching the lower bound of \cite{patrascu2006timespace} for all $t$.

A closely related setting is the \defn{dynamic FID} problem, where, in addition to answering $\rank$ and $\select$ queries, the data structure also needs to support fast insertions and deletions. When the universe size $U$ is linear in the number of keys $n$, Li et al.~\cite{li2023dynamic} achieved $O(n / 2^{\log^{0.199} n})$ bits of redundancy with the optimal operational time of $O(\log n / \log \log n)$. For polynomial universe sizes $U = n^{1 + \Theta(1)}$, it remains open whether a succinct data structure with $O(\log n / \log \log n)$ time per operation can be constructed.

Predecessor data structures are also closely related to FIDs and have been extensively studied across various parameter regimes and settings due to their importance. For a comprehensive overview, see the survey by Navarro and Rojas-Ledesma~\cite{navarro2021predecessor}.

Another variant of dictionary is the \defn{unordered dictionary} problem, where the data structure only needs to answer membership queries---whether a given key $x$ is in the current key set. A series of works has focused on both static and dynamic unordered dictionaries~\cite{yu2020nearly,hu2025optimal,raman2003succinct,li2023tight,bender2022optimal,li2024dynamic,bender2024modern}, leading to a static unordered dictionary with constant query time and $n^{\eps}$ bits of redundancy~\cite{hu2025optimal}, as well as tight upper and lower bounds for the time-space trade-off of dynamic unordered dictionaries for polynomial universe sizes~\cite{li2023tight,bender2022optimal,li2024dynamic}.

Raman, Raman, and Satti~\cite{raman2007succinct} studied the \defn{indexable dictionary (ID)} problem, which is similar to FIDs, but $\rank(x)$ only returns the rank of $x$ when $x$ is present in the current key set; otherwise, it returns ``not exist.'' They constructed IDs with $O(n / \sqrt{\log n})$ bits of redundancy and $O(1)$ query time. Note that the ID setting is easier than FIDs, allowing data structures to achieve both succinctness and constant-time queries.

A further relaxation of ID is \defn{monotone minimal perfect hashing (MMPH)}, where the data structure only needs to support $\rank$ queries for elements in the key set---if $x$ is not in the key set, $\rank(x)$ can return an arbitrary answer. ($\select$ queries are not required.) The key point of this relaxation is that MMPH data structures use asymptotically less space: Belazzougui, Boldi, Pagh, and Vigna~\cite{belazzougui2009monotone} constructed a data structure that uses only $O(n \log \log \log U)$ bits, while encoding the key set itself requires $\log \binom{U}{n} = O(n \log (U/n))$ bits. This bound is shown to be optimal by Assadi, Farach-Colton, and Kuszmaul~\cite{assadi2024tight} (see also \cite{kosolobov2024simplified}).

There is also a line of research on rank/select problems over arbitrary alphabets~\cite{grossi2003highorder,golynski2006rank,ferragina2007compressed,golynski2008redundancya,grossi2010optimal,barbay2014efficient,belazzougui2015optimal}. Given a sequence in $[\sigma]^n$, the select query asks for the $k$-th occurrence of a given symbol $s \in [\sigma]$; rank queries are defined similarly. These problems generalize FIDs, which correspond to the special case $\sigma = 2$. For arbitrary $\sigma$, the optimal time/redundancy trade-off is still not fully understood~\cite{belazzougui2015optimal}.

\section{Preliminaries}
\paragraph*{Augmented B-trees.}
We will use the augmented B-trees (aB-trees for short) from \cite{patrascu2008succincter} as a subroutine. 

Let $B$ and $m$ be parameters such that $m$ is a power of $B$, and let $A[1\ldots m]$ be an array of elements in the alphabet $\Sigma$. An aB-tree of branching factor $B$ and size $m$ is a full $B$-ary tree over $m$ leaves, which correspond to the entries of the array $A[1\ldots m]$. Additionally:
\begin{itemize}
  \item Each node of the aB-tree is augmented by a label from the \emph{label alphabet} $\Phi$. The label of a leaf node is determined by the corresponding entry $A[i]$ in the array, and the label of an internal node is determined by the label sequence of its children. Formally, there is a transition function $\mathcal{A}$ determining the label $\phi_u$ of each the internal node $u$: $\phi_u = \mathcal{A}(\vec{\phi})$, where $\vec{\phi} \defeq \bk{\phi_1, \phi_2, \ldots, \phi_B}$ is the label sequence of $u$'s $B$ children.
  \item There is a recursive query algorithm, which starts by examining the label of the root, and then recursively traverses down a path from the root to some leaf of the tree. At each step, the algorithm examines the label of the current node and the labels of its children to determine which of the children to recurse on. After reaching a leaf, the algorithm outputs the answer to the query based on all the examined labels. Furthermore, this algorithm is restricted to spend only $O(1)$ time at each examined node, ensuring that the query time remains at most $O(\log_B m)$.
\end{itemize}

Beyond \cite{patrascu2008succincter}, we also consider incomplete aB-trees. Let $m$ be any integer, not necessarily a power of $B$, and let $t$ be an integer such that $B^t \ge m$. An \emph{incomplete aB-tree} with branching factor $B$ and size $m$ is derived from a (full) aB-tree over $B^t$ leaves with the same branching factor $B$, by retaining only the first $m$ leaves while removing other $B^t - m$ leaves, and (repeatedly) deleting all internal nodes without any child. In such an incomplete aB-tree, the transition function $\mathcal{A}$ of labels still follows the form $\phi_u = \mathcal{A}(\vec{\phi})$, but the input label sequence $\vec{\phi} = \bk{\phi_1, \ldots, \phi_{\l}}$ may have a length $\l$ less than $B$, as $u$ might possess fewer than $B$ children.

Let $\mathcal{N}(m, \phi)$ be the number of instances of $A[1\ldots m]$ such that the aB-tree of branching factor $B$ over it will have root label $\phi$. According to Theorem 8 of \cite{patrascu2008succincter}, when $m$ is a power of $B$, we can compress the aB-tree with size $m$ and root label $\phi$ to within $\log \mathcal{N}(m, \phi) + 2$ bits. Their proof directly works for incomplete aB-trees ($m \le B^t$) as well.

\begin{lemma}[Natural generalization of {\cite[Theorem 8]{patrascu2008succincter}}]
  \label{lm:succincter}
  Let $m, B, t$ be parameters with $m \le B^t$ and $B = O\bk{\frac{w}{\log\bk{m + \abs{\Phi}}}}$. Suppose there is an (incomplete) aB-tree with branching factor $B$, size $m$, and root label $\phi$, then we can compress this aB-tree to within $\log \mathcal{N}(m, \phi) + 2$ bits with query time $O(t)$, in the word RAM model with word size $w$. The data structure uses a lookup table of $O\bk{B^{2t} \abs{\Sigma} + B^{3t} \abs{\Phi}^{2B}} $ words, which only depends on $B$ and $t$.
\end{lemma}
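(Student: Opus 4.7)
The plan is to follow \Patrascu's spillover encoding scheme from \cite{patrascu2008succincter} and verify that every step goes through when the underlying tree is incomplete rather than full. Recall the basic idea behind the original scheme. At each internal node $u$ with children labels $\vec{\phi} = (\phi_1, \ldots, \phi_B)$, the subtree rooted at $u$ is an element of $\prod_i [\mathcal{N}(m_i, \phi_i)]$ conditioned on $\mathcal{A}(\vec{\phi}) = \phi_u$, and these subtrees are exactly counted by $\mathcal{N}(m_u, \phi_u)$. The ``spillover'' representation stores a value in $[N]$ as a main string of $\lfloor \log N \rfloor$ bits together with a small spill value in a range of size $\lceil N / 2^{\lfloor \log N \rfloor}\rceil$, and the point of the scheme is that, when we move up one level, the spills coming out of the children can be absorbed into the parent's main-plus-spill representation of $\phi_u$'s subtree without incurring any additive loss other than one bit of rounding at that single level. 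Summing this telescoping loss up the tree gives the claimed $\log \mathcal{N}(m, \phi) + 2$ bits.

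First I would re-derive the encoding lemma formally for the incomplete setting: for each node $u$ with size $m_u$ and label $\phi_u$, I would show by induction up from the leaves that the subtree at $u$ can be stored with exactly $\lfloor \log \mathcal{N}(m_u, \phi_u) \rfloor$ main bits plus a spill in $[\lceil \mathcal{N}(m_u, \phi_u)/2^{\lfloor \log \mathcal{N}(m_u, \phi_u) \rfloor}\rceil]$. The only difference from the full case is that the branching factor $\l$ at $u$ may be any value in $\{1, \ldots, B\}$ rather than exactly $B$, and the sizes $m_1, \ldots, m_\l$ need not all equal $m_u/B$. But the recursive identity $\mathcal{N}(m_u, \phi_u) = \sum_{\vec{\phi}: \mathcal{A}(\vec{\phi}) = \phi_u} \prod_{i=1}^{\l} \mathcal{N}(m_i, \phi_i)$ still holds by definition, and this is all that \Patrascu's spillover composition actually uses; neither the regularity of $\l$ nor the uniformity of the $m_i$ plays any role.

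Second I would verify query time and lookup-table size. Since $m \le B^t$, every root-to-leaf path in the incomplete tree has length at most $t$, so decoding the label of a specified leaf touches at most $t$ nodes. At each node the decoder needs to (a) split a main-plus-spill representation of the node into main-plus-spill representations of each child, and (b) recover $\phi_u$ and which child to descend into. Both operations are $O(1)$ with the usual universal lookup table indexed by $(m_u, \phi_u, \vec{\phi})$ or by the encoded bit-string for the node. The size bound $O(B^{2t}\abs{\Sigma} + B^{3t}\abs{\Phi}^{2B})$ from \cite{patrascu2008succincter} needs only cosmetic modification: the $\abs{\Phi}^{2B}$ factor dominates over all $\l \le B$ so the sum over possible branching factors introduces no new term, and the $B^t$-type factors already account for all possible $m_u \le B^t$ and spill ranges. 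This gives query time $O(t)$ as required by $B = O(w/\log(m + \abs{\Phi}))$.

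The main obstacle will be the bookkeeping at the boundary path, i.e., the $O(t)$ ancestors of the $m$-th leaf whose branching factors differ from $B$. For these nodes the child-label tuple has variable length, so the lookup table must be indexed by the actual $\l$ as well, and one must check that the transition function $\mathcal{A}$ is indeed well-defined on shorter tuples (this is an explicit assumption in the Preliminaries). I would carefully argue that the spillover decomposition at each such boundary node still loses at most one bit of additive redundancy, and that these $O(t)$ bits are absorbed into the single global $+2$ by the telescoping mentioned above; everything off the boundary path is a full aB-subtree and is handled verbatim by \cite[Theorem 8]{patrascu2008succincter}.
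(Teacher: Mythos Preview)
Your approach is essentially the paper's: both observe that \Patrascu's spillover argument uses only the recursive identity $\mathcal{N}(m_u,\phi_u)=\sum_{\vec\phi:\mathcal{A}(\vec\phi)=\phi_u}\prod_i\mathcal{N}(m_i,\phi_i)$ and hence carries over verbatim to incomplete trees. Two points where the paper is sharper than your sketch.

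First, on the lookup table. The original table in \cite{patrascu2008succincter} has size $O\bk{B(|\Sigma|+|\Phi|^{B+1}+B|\Phi|^B)}$ and depends on $m$, not just on $B$ and $t$. The paper removes the $m$-dependence by the blunt device of concatenating one such table for every $m\in[1,B^t]$, giving $O\bk{B^{t+1}(|\Sigma|+|\Phi|^{B+1}+B|\Phi|^B)}\le O\bk{B^{2t}|\Sigma|+B^{3t}|\Phi|^{2B}}$. You attribute the latter bound to \cite{patrascu2008succincter} directly and say the $B^t$ factors ``already account for'' all $m_u$; that is imprecise---the extra factor of $B^t$ is exactly the new step here, and it is what makes the table shareable across instances with different $m$.

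Second, your last paragraph misreads how spillover accounts for redundancy. The scheme does \emph{not} lose a bit per level: child spills are absorbed exactly into the parent's encoding, and the only rounding happens once, when the root's spill is written out. So there are no ``$O(t)$ bits'' on the boundary path to absorb into the global $+2$; boundary nodes behave identically to interior nodes in this respect, and the argument you need is simpler than the one you outline. Apart from these two clarifications the proposal is correct.
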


Notice that the lookup table only depends on $B$ and $t$, but not $m$. The original construction in \cite{patrascu2008succincter}, when applied on incomplete aB-trees, uses a lookup table of $O\bk{B(|\Sigma| + |\Phi|^{B+1} + B |\Phi|^B)}$ words, which depends on $B$, $t$, and $m$. To avoid the dependency on $m$, we simply concatenate $B^t$ such lookup tables for all values of $m$ together, with at most $O(B^{t+1}\cdot \bk{\abs{\Sigma} + \abs{\Phi}^{B+1} + B\abs{\Phi}^B}) \le O(B^{2t} \abs{\Sigma} + B^{3t}\abs{\Phi}^{2B})$ words, and let aB-trees with different sizes $m$ use different parts of the (concatenated) lookup table. Later when we apply this lemma in our data structure, we will let multiple instances with the same $B$ and $t$ (but with different $m$) share a single lookup table.

\paragraph*{Predecessor data structures.}
We will use the following extension of predecessor data structure as a subroutine.

\begin{prob}[Predecessor with associated values]
  Storing a set $S \subset [U]$ of $n$ keys,\footnote{We use notation $[n]$ to denote the set $\BK{0,1,\ldots,n-1}$ for any non-negative integer $n$, and use $[a, b]$ to denote the set of integers $\BK{a, a+1, \dots, b}$ when there is no ambiguity.} where each key is associated with a value in $[V]$, supporting predecessor and successor queries:
  \begin{itemize}
  \item $\textup{\textsc{Predecessor}}(x)$: Return the largest element $y \in S$ such that $y \le x$, and the associated value of $y$.
  \item $\textup{\textsc{Successor}}(x)$: Return the smallest element $y \in S$ such that $y > x$, and the associated value of $y$.
  \end{itemize}   
\end{prob}
By studies on dictionaries \cite{fredman1984storing} and predecessor data structures \cite{patrascu2006timespace}, there is a compact construction for predecessor data structures with associated values:
\begin{lemma}
  \label{lm:predecessor}
  For any $U \ge n$, there are predecessor data structures with associated values using $O\bk[\big]{n\log U + n \log V}$ bits, with query time $O(\log \log (U/n))$ in the worst case.
\end{lemma}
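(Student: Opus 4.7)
The plan is to prove the lemma by a black-box combination of two classical ingredients: a compact predecessor structure on just the key set, and a compact dictionary mapping keys to their associated values. First, I would invoke the \Patrascu--Thorup predecessor data structure of \cite{patrascu2006timespace}, which stores $n$ keys from $[U]$ in $O(n \log U)$ bits and answers both predecessor and successor queries in worst-case time $O(\log \log (U/n))$; crucially, this structure returns the actual key $y \in S$, not merely an internal handle or rank.

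Second, to attach values, I would build an FKS-style perfect hash dictionary \cite{fredman1984storing} on the key set $S$, storing for each key $y \in S$ its associated value in $[V]$. Hashing the $n$ (key, value) pairs takes $O(n \log U + n \log V)$ bits in total and supports lookups in worst-case $O(1)$ time. A query on $x$ then proceeds in two stages: call the \Patrascu--Thorup structure to obtain the predecessor (respectively successor) key $y \in S$, then probe the FKS dictionary at $y$ to read off its associated value. This yields total query time $O(\log \log(U/n))$ and total space $O(n\log U + n \log V)$ bits, matching the claim.

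I do not expect a real obstacle here, as this lemma is essentially the composition of two well-known black boxes. The only mild subtlety is alignment of the key representations between the two structures: as long as the predecessor routine emits the key $y$ as an element of $[U]$, it can be fed directly into the FKS dictionary keyed by $S$. An alternative implementation, which would save a constant factor, is to augment the predecessor data structure so that it returns the rank of $y$ in $S$ and to store the associated values in a packed array of $n$ cells of $\lceil \log V \rceil$ bits indexed by rank; but this refinement is unnecessary for the stated asymptotic bound.
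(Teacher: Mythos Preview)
Your proposal is correct and matches the paper's own proof sketch almost verbatim: the paper also composes the \Patrascu--Thorup predecessor/successor structure of \cite{patrascu2006timespace} with an FKS perfect-hash dictionary \cite{fredman1984storing} on $S$ to attach the associated values. The only cosmetic difference is that the paper lists separate predecessor and successor structures rather than a single structure supporting both, which does not affect the asymptotics.
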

\begin{proof}[Proof Sketch] We maintain the following data structures simultaneously:
  \begin{itemize}
  \item A predecessor data structure for the set $S \subset [U]$ by \cite{patrascu2006timespace} with space $O(n \log U)$ and query time $O(\log \log (U/n))$.
  \item A successor data structure similar to the predecessor data structure.
  \item A ``perfect hashing'' by \cite{fredman1984storing,cormen2022introduction} to store the key set $S$ with their associated values, with a space complexity of $O(n (\log U + \log V))$ bits and a worst-case query time $O(1)$.\qedhere
  \end{itemize}
\end{proof}

Throughout this paper, we use the terminology ``predecessor data structure'' for short to refer to the data structure defined by \cref{lm:predecessor}.

\section{Basic Data Structure for FIDs}
\label{sec:simple}

In this section, we will construct a basic data structure for FIDs with a slightly worse time and space guarantee than the requirement of \cref{thmt@@thmMainFID}. It will serve as a subroutine in the final data structure in \cref{sec:advanced}: The final data structure is based on the algorithm framework in this section, and by replacing a subroutine with the result in this section (\cref{thm:simple_FID}) in a non-recursive way, it achieves the ideal time-space trade-off. Formally, we will prove the following theorem:

\begin{theorem}[Weak version of \cref{thmt@@thmMainFID}]
  \label{thm:simple_FID}
  For any parameters $U, n, t$ with $U = n^{1 + \Omega(1)} $, $t \le \log U / \log \log U$ and a constant $\eps > 0$, there is a static fully indexable dictionary with query time $O(t\log \log U)$ and redundancy $R = \max\BK{n/(\log U/t)^{\Omega(t)}, O(\log U)}$, in the word-RAM model with word size $w = \Theta(\log U)$. The data structure uses a lookup table of $O(U^{10\eps})$ words that only depend on $U$ and $t$ which could be shared among multiple instances of fully indexable dictionaries.
\end{theorem}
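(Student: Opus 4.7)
My plan is to construct the basic FID as a recursive two-level hierarchy: at each level a predecessor-based routing index partitions the problem into ``chunks,'' and within each chunk we recurse; at the base of the recursion the chunks are handled by P{\v a}tra{\c s}cu's succincter aB-tree from Lemma~\ref{lm:succincter}. At the top of any level with $n'$ keys in a universe $[U']$, I would partition the $n'$ keys into $n'/g$ consecutive chunks of $g$ keys by rank order (with $g$ chosen as a suitable polynomial in $n'$), store the chunks' separators in the predecessor data structure of Lemma~\ref{lm:predecessor}---costing $O((n'/g)\log U')$ bits and routing rank queries in $O(\log\log U')$ time---and attach to each chunk a sub-FID on its $g$ keys within its local universe $[U_i]$. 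Select queries route in $O(1)$ time by arithmetic on the chunk index; only rank queries pay the predecessor cost.

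Per chunk, the sub-FID is either a direct invocation of the succincter aB-tree (Lemma~\ref{lm:succincter}) when the local universe $U_i$ is small compared to $g$, or another recursive application of the same construction when $U_i \gg g$. I would tune the per-level group size $g$ so that the recursion terminates within $O(\log\log U)$ levels---because the effective per-key bit length must fall by a constant factor every level---and so that at the base the aB-tree delivers $O(t)$ query time with redundancy $O(g/(\log g/t)^{\Omega(t)})$ per chunk. Summing $O(\log\log U)$ work per recursion level plus the $O(t)$ base cost yields the target $O(t\log\log U)$ worst-case query time.

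The space accounting collects three contributions. (a) The top-level predecessors across all recursion levels sum geometrically to $O((n/g)\log U) = O(n^{1-\Omega(1)}\log U)$ bits. (b) The base-level aB-trees contribute $\sum_i O(g/(\log g/t)^{\Omega(t)}) = O(n/(\log g/t)^{\Omega(t)})$, which equals the target $O(n/(\log U/t)^{\Omega(t)})$ because $g$ is polynomial in $n$ and $\log U = \Theta(\log n)$. (c) A combinatorial slack of $O((n/g)\log g)$ comes from decomposing $\log\binom{U}{n}$ into $\log\binom{U}{n/g} + \sum_i\log\binom{U_i-1}{g-1}$, and is absorbed into (a). The aB-tree lookup table depends only on $B$ and $t$; by choosing $B$ as a sub-polylogarithmic function of $U$ with $B^{t} \le U^{O(\eps)}$, it fits within the $O(U^{10\eps})$ word budget and can be shared across all chunks and levels.

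The main obstacle is the per-chunk step: ensuring that each chunk's aB-tree has redundancy proportional to $g$, not to the possibly much larger $U_i$. P{\v a}tra{\c s}cu's dense-case bound is universe-proportional, so a naive plug-in when $U_i \gg g$ would accumulate to $\Theta(U/\poly\log U) \gg n/\poly\log U$ total redundancy. Resolving this via recursion means Lemma~\ref{lm:succincter} is invoked only once the effective universe is dense, and that requires balancing three competing constraints: (i) each level's predecessor must be cheap enough to sum geometrically, (ii) the recursion depth must remain $O(\log\log U)$, and (iii) the aB-tree's label alphabets and branching factor must fit within the lookup-table budget. Once this balance is set, the select routing, the floor $O(\log U)$ term in the redundancy (which handles the very small-$n$ regime), and the lookup-table sharing all follow routinely.
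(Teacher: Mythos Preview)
Your time analysis has a real gap. You state that the recursion has depth $O(\log\log U)$ and that each level of a \rank query pays an $O(\log\log U)$ predecessor cost; summing gives $O((\log\log U)^2 + t)$, not $O(t\log\log U)$. These coincide only when $t \ge \log\log U$, but the theorem must hold for $t = O(1)$ as well---indeed, \cref{thm:simple_FID} is later invoked in \cref{sec:advanced} precisely with $t' = O(1)$. In that regime your bound is $O((\log\log U)^2)$, which is the Gupta--Hon--Shah--Vitter bound the paper is trying to surpass, so the approach as written does not prove the stated theorem.

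Two related issues compound this. First, the claim that ``the effective per-key bit length falls by a constant factor every level'' is only an average statement: a single chunk can have local universe $U_i$ nearly as large as $U$, so it need not become dense before the recursion exhausts its keys. Second, invoking \cref{lm:succincter} at the base with $O(t)$ query time forces the base array length (hence $g$) to be at most $B^t = U^{O(\eps)}$; for small $t$ this is $\poly\log U$, not ``polynomial in $n$,'' so your step ``$\log g = \Theta(\log U)$'' in the redundancy accounting also fails exactly when the time bound is tightest.

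The paper's route is non-recursive and structurally different. It uses a \emph{single} blocking level with blocks of $B^t$ keys, and inside each block it takes the difference sequence of the top bits and splits every key into three pieces: a \emph{low} part (the bottom $b=\log(U/n)-h$ bits, stored as a raw array), a \emph{mid} part (the low $2h$ bits of each difference, stored in one aB-tree with sum labels), and a \emph{high} part (the remaining difference bits, which are sparse and handled by one predecessor structure). The $O(t\log\log U)$ running time then comes from a \emph{single} binary search over at most $B^t$ consecutive low parts (cost $\log B^t = t\log B \le t\log\log U$), plus $O(\log\log U)$ for two predecessor lookups and $O(t)$ for the aB-tree. The low/mid/high split is exactly what lets the aB-tree act on an array of fixed size $B^t$ regardless of how large the block's local universe is, avoiding both the recursion and the density assumption your plan relies on.
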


Although in most applications of FIDs we care about polynomially-sized universes $U = n^{1 + \Theta(1)}$, here we also consider the parameter regime where $n = U^{o(1)}$ is significantly smaller than $U$. The reason is that, later in \cref{sec:advanced}, we will use \cref{thm:simple_FID} to maintain short subsequences of keys.

In the remainder of this section, we construct this basic data structure to prove \cref{thm:simple_FID}. \cref{tab:notations} lists the main parameters and notations we will introduce in this section.

\begin{table}[h!]
    \centering
    \caption{Table of Notations}\label{tab:notations}
    \begin{tabular}{|>{\centering\arraybackslash}m{3cm}|>{\arraybackslash}m{12cm}|}
        \hline
        \textbf{Notation} & \multicolumn{1}{c|}{\textbf{Explanation}} \\ \hline
        $t$ &
        A parameter indicating that our algorithm's time constraint is $O(t \log \log U)$.
        \\ \hline
        $R$ &
        $R \defeq \max\BK{n / (\log U / t)^{\Omega(t)}, \, O(\log U)}$ is the desired redundancy of our FID.
        \\ \hline
        $B$ &
        The branching factor of aB-trees in the mid parts. $B \log B = (\epsilon \log U) / t$. When $t \le \log^{0.99} U$, there is
        $B = \log^{\Theta(1)} U$.
        \\ \hline
        $B^t$ &
        The number of keys in each block.
        \\ \hline
        $h$ &
        $h \defeq t \log B$. Each mid part consists of $2h$ bits.
        \\ \hline
        $b$ &
        $b \defeq \log(U/n) - h$. Each low part consists of $b$ bits.
        \\ \hline
        $\xlow$,~$\deltamid$,~$\deltahigh$ &
        The value in the low, mid, and high parts. See \cref{fig:partition_combined}.
        \\ \hline
        $\delta_i$ &
        The value in the mid and high parts together. It equals $\deltamid + 2^{2h} \cdot \deltahigh$.
        \\ \hline
        $\delta_{\le i}$ &
        $\delta_{\le i} \defeq \sum_{j=1}^i \delta_j$ is the prefix sum of $\BK{\delta_i}$.
        \\ \hline
        $\Delta$ &
        $\Delta = \delta_{\le B^t}$ is the summation of all $\delta_i$ within a block.
        \\ \hline
    \end{tabular}
\end{table}

\paragraph*{Partitioning into blocks.} Let $S = \BK{x_1, x_2, \ldots, x_n}$ be the set of keys we need to store, where $x_1 < x_2< \cdots < x_n$. The first step of the construction is to divide the sequence $\bk{x_1, x_2, \ldots, x_n}$ into small blocks, and to store some inter-block data structure that reduces the entire FID problem to small-scale FID problems within each block.

Let $B$ be a parameter where $B \log B = \frac{\eps \log U}{t}$, and we break the whole sequence into blocks of size $B^t$. As the sequence $\bk{x_1, x_2, \ldots, x_n}$ is monotonically increasing, the partition into blocks could be viewed as partitioning the possible range of the keys $[U]$ into $n/B^t$ disjoint intervals, where the $k$-th block corresponds to the interval $(x_{(k-1)B^t}, x_{k\cdot B^t}]$.\footnote{We let $x_0 = 0$ for convenience. When $n$ is not divisible by $B^t$, especially when $n < B^t$, the size of the last block will be smaller than $B^t$. Fortunately, the same construction below works for any block size $m \le B^t$, and we only illustrate our construction for block size $B^t$.}

Our inter-block data structure consists of the following two parts:
\begin{itemize}
\item The sequence of endpoints of each interval, i.e., $\bk{x_{B^t}, x_{2\cdot B^t}, \ldots, x_{n}}$.
\item A predecessor structure for the sequence of endpoints $\bk{x_{B^t}, x_{2\cdot B^t}, \ldots, x_{n}}$, where each entry $x_{k\cdot B^t}$ is associated with value $k$.
\end{itemize}
Given the above auxiliary information, we view each block $k$ as an FID with $B^t$ keys from a universe of size $x_{k B^t} - x_{(k - 1)B^t}$. When we perform a rank query $\rank(x)$, the second part above can help us locate the block $k$ containing the queried element $x$ (i.e., $x \in (x_{(k-1)B^t},\, x_{k B^t}]$), transforming the original query into a \rank query in the $k$-th block, within $O(\log \log U)$ time (see \cref{lm:predecessor}). When we perform a select query $\select(i)$, it becomes a \select query within block $k = \ceil{i / B^t}$. The remaining question is to answer \rank/\select queries within each block.

\paragraph*{Storing difference sequences within blocks.}
Throughout the remainder of this section, we mainly focus on the FID problem within a single block $k$. Letting $s = (k - 1) B^t$, the intra-block FID problem requires us to maintain the sequence of keys $(x_{s+1}, \ldots, x_{s + B^t})$ in the universe $(x_s, x_{s + B^t}]$. 
We cut the binary representation of each key $x_{s + i}$ into two parts, as shown in \cref{fig:partition_low}:
\begin{itemize}
\item Letting $h$ be a parameter to be determined and $b \defeq \log \frac{U}{n} - h$, we define the \defn{low part} as the $b$ least significant bits in the binary representation of each key $x_{s + i}$, denoted by integers $\xlow \in [2^b]$. We will directly store these integers in the data structure.
\item The $\log U - b$ remaining (more significant) bits are referred to as the \defn{mid-high part}, denoted by integers $\xmidhigh$. For these integers, we aim to store the \defn{difference sequence}, i.e., the differences $\delta_i \defeq \xmidhigh[i] - \xmidhigh[i-1]$ between adjacent pairs of elements.
\end{itemize}
Clearly, we have
\[ x_{s+i} = \xlow + \xmidhigh \cdot 2^b \qquad \bk{0 \le i \le B^t}. \]
For short, we let $\delta_{\le i}$ denote the partial sum $\sum_{j = 1}^i \delta_j$, so that $\xmidhigh[i] = \delta_{\le i} + \xmidhigh[0]$.
We further define $\Delta \defeq \delta_{\le B^t}$ as the sum of the difference sequence $(\delta_1, \ldots, \delta_{B^t})$:
\begin{align*}
  \Delta \defeq \sum_{i=1}^{B^t} \delta_i = \xmidhigh[B^t] - \xmidhigh[0] = \floor*{\frac{x_{s+B^t}}{2^b}} - \floor*{\frac{x_{s}}{2^b}}.
\end{align*}
We denote by $\Delta_{k'}$ the value of $\Delta$ in the $k'$-th block. Clearly,
\begin{align*}
  \label{ineq:sum_of_delta_k}
  \sum_{k'=1}^{n/B^t} \Delta_{k'} = \floor*{\frac{x_{n}}{2^b}} - \floor*{\frac{x_0}{2^b}} \le \frac{U}{2^b} = n\cdot 2^h.\numberthis
\end{align*}

With these notations, the FID problem within a single block can be restated as follows:
\begin{prob}[FID within a block]
  \label{prob:FID_within_block}
  Let $\Delta$ be a parameter stored outside. We need to store two sequences $\bk{\xlow[1], \xlow[2], \ldots, \xlow[B^t]}$ and $\bk{\delta_1, \delta_2, \ldots, \delta_{B^t}}$, such that $\xlow \in [2^b]$ and $\sum_{i=1}^{B^t} \delta_i = \Delta$, with the property that $\bk[\big]{\xlow + \delta_{\le i} \cdot 2^b}_{i = 1}^{B^t}$ forms a strictly increasing sequence, supporting the following queries:
  \begin{itemize}
  \item $\partialsum(i)$: Return $\delta_{\le i} \cdot 2^b + \xlow$. It corresponds to the \select queries in the original FID problem.
  \item $\rank(x)$: Return the largest $i$ such that $\delta_{\le i} \cdot 2^b + \xlow \le x$.
  \end{itemize}
\end{prob}

As introduced above, once we have a solution to \cref{prob:FID_within_block} with $O(T)$ time for any $T$, with the help of the inter-block data structure, we may answer the \rank/\select queries to the original key sequence in $O(T + \log \log U)$ time.

\begin{figure}[ht]
  \centering
  \begin{subfigure}[c]{6.5cm}
    \begin{subfigure}[c][4cm][c]{6.5cm}
      \vspace{0.1cm}
      \hspace{0.4cm}
      \includegraphics[width = 7.4118cm]{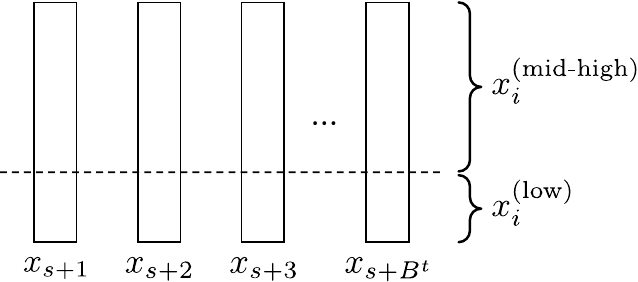}
    \end{subfigure}
    \caption{Separating low parts from mid-high parts.}
    \label{fig:partition_low}
    \centering
  \end{subfigure}
  \hspace{1cm}
  \hspace{0.5cm}  %
  \begin{subfigure}[c]{6cm}
    \begin{subfigure}[c][4cm][c]{6cm}
      \includegraphics[width = 7cm]{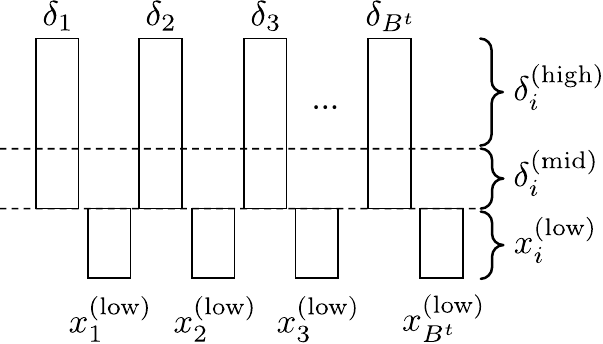}
    \end{subfigure}
    \caption{Separating mid parts and high parts.}
    \label{fig:3partition}
    \centering
  \end{subfigure}
  \hspace{1cm}
  \caption{Partitioning keys into three parts. We (a) divide the binary representations of the keys into low parts and mid-high parts, and further (b) take the difference sequence $\bk{\delta_i}_{1 \le i \le B^t}$ of the mid-high parts and divide them into mid parts and high parts. There are $b \defeq \log (U/n) - h$ bits in the low part, $2h$ bits in the mid part, and $\log n - h$ bits in the high part.}
  \label{fig:partition_combined}
\end{figure}

\paragraph*{The three-part partition.}
For some technical reasons, we further divide each $\delta_i$ (i.e., the difference sequence of the mid-high part) into two smaller parts as follows. We will use different approaches to organize these parts in \cref{sec:simple/ds}.
\begin{itemize}
\item Let $\deltamid$ be the integer formed by the $2h$ least significant bits of $\delta_i$, which we call the \defn{mid part}.
\item Let $\deltahigh$ be the integer formed by the remaining $\log U - b - 2h = \log n - h$ bits, which we call the \defn{high part}.
\end{itemize}
Formally,
\[\delta_i = \deltamid + \deltahigh \cdot 2^{2\height} \qquad (i \in [1, B^t]).\]

By now, we have cut the sequence of keys $(x_{s+1}, \ldots, x_{s+B^t})$ in a single block into three parts: the high part $\bk{\deltahigh[1], \ldots, \deltahigh[B^t]}$, the mid part $\bk{\deltamid[1], \ldots, \deltamid[B^t]}$, and the low part $\bk{\xlow[1], \ldots, \xlow[B^t]}$, as shown in \cref{fig:3partition}. The following subsection \ref{sec:simple/ds} will design (almost) separate data structures to store these three parts.

\subsection{Data Structure within Each Block}
\label{sec:simple/ds}
In this subsection, we will design the intra-block data structure with time complexity $O(t \log \log U)$ of three parts.
At a very high level, these three parts of the data structure will interact as follows when performing \partialsum and \rank queries:
\begin{itemize}
  \item During the query $\partialsum(i)$, we will compute $\deltahigh[\le i]$, $\deltamid[\le i]$ and $\xlow$ from the high part, mid part and the low part of the data structure separately,\footnote{We define $\deltahigh[\le i]$ and $\deltamid[\le i]$ similarly to $\delta_{\le i}$: $\deltahigh[\le i] \defeq \sum_{j=1}^{i} \deltahigh[j]$, and so is $\deltamid[\le i]$.}
  and then combine them into the output.
  \item During the $\rank(x)$ query, we sequentially examine the high part, mid part, and low part of the data structure. At each step, we narrow down the search interval for the desired index $i$, gradually approaching its exact value in the low part.
\end{itemize}
Below, we will provide the detailed construction of the three parts separately.

\paragraph*{High part.}
The high part sequence $\bk{\deltahigh[1], \ldots, \deltahigh[B^t]}$ is sparse and thus easy to store. As
\begin{align*}
    \sum_{i=1}^{B^t} \deltahigh = \sum_{i=1}^{B^t} \floor*{\frac{\delta_i}{2^{2h}}} \le \frac{1}{2^{2h}} \sum_{i=1}^{B^t} \delta_i = \frac{\Delta}{2^{2h}},
\end{align*}
there are at most $\Delta/2^{2h}$ non-zero entries in the high part. 
By setting a large parameter $h$, we can make the space usage of the high part very small such that we can regard the entire high part as redundant information.

Let $I \subseteq [1, B^t]$ be the set of indices of all non-zero entries in the high part, then $\abs{I} \le \Delta/2^{2h}$. The high part of the data structure consists of the following two components: 
\begin{itemize}
\item A predecessor data structure of the set $I$, where each $i \in I$ is associated with $\deltahigh[\le i]$. 
\item A predecessor data structure of the set $\BK{\delta_{\le i} : i\in I}$, where each $\delta_{\le i}$ is associated with $i$.\footnote{Note that all the $\delta_{\le i}$ are distinct for $i \in I$ as $\deltahigh \neq 0$ for all $i \in I$, hence $\BK{\delta_{\le i} : i\in I}$ is a valid set.}
\end{itemize}
According to \cref{lm:predecessor}, there is a compact implementation of predecessor data structures, with $O(\abs{I} \cdot w) = O(w \cdot \Delta / 2^{2h})$ bits of space and query time $O(\log \log U)$.

\paragraph*{Mid part.}
We will store the mid-part sequence $\bk{\deltamid[1], \ldots, \deltamid[B^t]}$ by an aB-tree with branching factor $B$ and size $B^t$, using \cref{lm:succincter}.\footnote{When the block size $m$ is smaller than $B^t$, we will use an incomplete aB-tree, that is why we need to support incomplete aB-trees in \cref{lm:succincter}.}
The details of the aB-tree are as follows.
\begin{itemize}
\item Each leaf of the aB-tree contains a single mid-part entry $\deltamid[i] \in [2^{2h}] \eqdef \Sigma$.
\item There is a label on each node that equals the sum of all the leaves $\deltamid$ in its subtree. As $\deltamid < 2^{2h}$ for each $i \in [1, B^t]$, we have $\Phi \subseteq \Bk{B^t \cdot 2^{2h}}$.
\item The premise of \cref{lm:succincter}, $B = O(\frac{w}{\log \bk{B^t + \abs{\Phi}}})$, can be satisfied by setting $h = t \log B$. Recalling that $B$ is a parameter with $B\log B = \frac{\eps\log U}{t}$, we have $B = \frac{\eps\log U}{t\log B} = O\bk{\frac{\log U}{\log \bk{B^t\cdot 2^{2h}}}} = O\bk{\frac{w}{\log \bk{B^t + \abs{\Phi}}}}$.
\item The size of the lookup table is
  \[O(B^{2t}\abs{\Sigma} + B^{3t}\abs{\Phi}^{2B}) = O(B^{3t}\abs{\Phi}^{2B}) \le O(B^{3t} \cdot (B^{3t})^{2B}) = O(B^{6tB + 3t}) \ll O(B^{10tB}) = O(U^{10\eps})\]
  words, which will be shared between blocks.
\item The aB-tree can support three types of queries within $O(t)$ time:
  \begin{enumerate}
  \item Given $i$, query $\deltamid[\le i]$.
  \item Given $v$, query the largest index $i_2$ with $\deltamid[\le i_2] \le v$.
  \item Given the value $v$, query the \emph{maximal interval} $[i_1, i_2]$ of the sequence $\bk{\deltamid[1], \ldots, \deltamid[B^t]}$ with respect to $v$, defined as follows.
  \end{enumerate}
  \begin{definition}
    The \defn{maximal interval} of a (non-negative) sequence $\bk{a_1, \ldots, a_{B^t}}$ with respect to value $v$ is defined as the interval $[i_1, i_2]$ formed by all indices $i$ with $a_{\le i} = v$. In the maximal interval, we have $a_{i_1 + 1} = a_{i_1 + 2} = \cdots = a_{i_2} = 0$, while $a_{i_1}, a_{i_2 + 1} > 0$, and $a_{\le i_1} = v$.
  \end{definition}
\end{itemize}

We now compute the space usage of the mid part. To use \cref{lm:succincter}, we need to first store the root label $\phi$ before storing the aB-tree, as \cref{lm:succincter} assumes free access to the root label.
After that, the aB-tree will occupy $\log \mathcal{N}\bk{B^t, \phi} + 2 $ bits where $\mathcal{N}\bk{B^t, \phi}$ is the number of sequences $\bk{\deltamid[1], \ldots, \deltamid[B^t]}$ with root label $\phi$. Recalling that $\phi$ is the sum of this sequence and
\[\phi = \sum_{i=1}^{B^t}\deltamid \le \sum_{i=1}^{B^t}\delta_i = \Delta,\]
the number of possible sequences $\bk{\deltamid[1], \ldots, \deltamid[B^t]}$ is at most $\binom{\Delta + B^t }{B^t - 1}$. Therefore, the number of bits used by the mid part (per block) is at most
\begin{align*}
  O(\log \Delta) + \log \mathcal{N}(B^t, \phi) + 2 = \log \binom{\Delta + B^t}{B^t - 1} + O(w).
\end{align*}

\paragraph*{Low part.}

In this basic data structure, the low part is directly stored in an array. Specifically, we store the sequence $\bk{\xlow[1], \ldots, \xlow[B^t]}$ one by one as $B^t$ integers each of $b$ bits in the memory.

Note that the low-part sequence has the following \defn{locally increasing} property which we will use in our query algorithms:
\begin{itemize}
  \item Restricted to a maximal interval $[i_1, i_2]$ of $\bk{\delta_1, \ldots, \delta_{B^t}}$, the subsequence $\bk{\xlow[i_1], \ldots, \xlow[i_2]}$ is \emph{strictly increasing}. This is because by the condition of \cref{prob:FID_within_block}, the sequence $\bk{\xlow[i_1] + 2^b \cdot \delta_{\le i_1}, \, \ldots \, , \, \xlow[i_2] + 2^b \cdot \delta_{\le i_2}}$ is strictly increasing, whereas $\delta_{\le i_1} = \cdots =\delta_{\le i_2}$ by the definition of  maximal interval.
  \end{itemize}

\paragraph*{Query algorithms.}
Now we can formally state our algorithms for the \partialsum and \rank queries in this intra-block data structure.

For the $\partialsum(i)$ query, we will query the high part, mid part, and low part of the data structure separately to get $\deltahigh[\le i]$, $\deltamid[\le i]$, and the $\xlow$:
\begin{itemize}
  \item To get $\deltahigh[\le i]$, we query $i$ on the first predecessor data structure of the high part, getting the largest index $i' \in I$ such that $i' \le i$, with its associated value $\deltahigh[\le i']$. As $i'$ is the last index before $i$ with a nonzero $\deltahigh[i']$, we have $\deltahigh[\le i] = \deltahigh[\le i']$, as desired. This takes $O(\log \log U)$ time.
  \item To get $\deltamid[\le i]$, we directly query it from the aB-tree of the mid part, which takes $O(t)$ time.
  \item To get $\xlow$, we directly read it out from the integer array of the low part, which takes $O(1)$ time.
\end{itemize}
Finally, the algorithm will return $\delta_{\le i} \cdot 2^b + \xlow =  \deltahigh[\le i] \cdot 2^{2h + b} + \deltamid[\le i] \cdot 2^{b} + \xlow$. The total time cost will be $O(t + \log \log U)$.

\smallskip

For the $\rank(x)$ query, we will sequentially query the high part, mid part, and low part of the data structure to narrow down the search interval of the desired index $i$:
\begin{itemize}
\item We first query the high part to locate $i$ within a maximal interval of $\bk{\deltahigh[1], \ldots, \deltahigh[B^t]}$. This is achieved by querying the second predecessor data structure of the high part, when we will get two adjacent indices $i_1, i_2 \in I$ with $\delta_{\le i_1} \le \floor{x/2^b} < \delta_{\le i_2}$, which locates the index $i$ within $[i_1, i_2)$. Moreover, as $i_1, i_2$ are adjacent indices in $I$, the interval $[i_1, i_2)$ forms a maximal interval of the sequence $\bk{\deltahigh[1], \ldots, \deltahigh[B^t]}$ with respect to the value $\vhigh \defeq \deltahigh[\le i_1]$. This takes $O(\log \log U)$ time.
\item Recall that the mid-part step needs to find the largest $i' \in [i_1, i_2)$ such that $\delta_{\le {i'}}$ does not exceed a threshold $\floor{x / 2^b}$. As all ${i'}$ share the same value $\deltahigh[\le {i'}]$, this is equivalent to finding the largest ${i'} \in [i_1, i_2)$ such that $\deltamid[\le {i'}]$ does not exceed $\floor{x / 2^b} - \vhigh \cdot 2^{2h}$. This can be done with one query to the aB-tree in the mid part.
  \begin{itemize}
  \item If the ${i'}$ we found satisfies the strict inequality $\delta_{\le {i'}} < \floor{x / 2^b}$, we conclude the query with $i = {i'}$, because no matter what the low parts are, we already know $\delta_{\le {i'}} \cdot 2^b + \xlow[i'] \le (\delta_{\le i'} + 1) \cdot 2^b - 1$ is strictly smaller than the queried key $x$, and $\delta_{\le {i' + 1}} \cdot 2^b + \xlow[i' + 1] \ge \delta_{\le {i' + 1}} \cdot 2^b$ is larger than $x$.
  \item If $\delta_{\le i'}$ is equal to the threshold $\floor{x / 2^b}$, we find the maximal interval $[i'_1, i'_2] \subseteq [i_1, i_2)$ of $(\delta_1, \ldots, \delta_{B^t})$ with $\delta_{\le i'_1} = \delta_{\le i'_2} = \floor{x / 2^b}$ by querying the aB-tree again. In this case, we can locate the answer $i$ to the query within the interval $[i'_1 - 1, \, i'_2]$, but the concrete answer depends on the information in the low part.
  \end{itemize}
  In both cases, the mid-part step takes $O(t)$ time.

\item Assuming the previous step encounters the second case, where an interval $[i'_1, \, i'_2]$ is known to have the same $\delta_{\le i'}$ for all $i'$, and the answer $i$ to the query is guaranteed to reside in $[i'_1 - 1, \, i'_2]$. Therefore, we need to find the largest $i \in [i'_1, i'_2]$ such that $\xlow[i] \le (x \bmod 2^b)$, and that will be the answer to the query. (If $\xlow[i'_1]$ is already larger than $(x \bmod 2^b)$, the answer to the query should be $i'_1 - 1$.) 
  Since $\bk[\big]{\xlow[i'_1], \ldots, \xlow[i'_2]}$ is strictly increasing, we can use binary search to find $i$ within $O(\log L)$ time where $L \defeq i'_2 - i'_1 + 1$; as $L \le B^t$, the running time of this step is bounded by $O(\log B^t) = O(t \log \log U)$.\footnote{Recall that $B \log B = \frac{\eps \log U}{t}$, we have $B \le \log U$, hence $t\log B \le t \log \log U$.}
\end{itemize}
Hence, the total time cost of the \rank query is at most $O(t\log \log U)$.

\paragraph*{Space usage of the block.} Recalling that in our construction, the high part, mid part, and low part of the data structure use $O(w \cdot\Delta / 2^{2h})$, $\log \binom{\Delta + B^t}{B^t - 1} + O(w)$ and $b \cdot B^t$ bits, respectively, the total space used by the intra-block data structure is
\begin{align*}
    \label{eq:space_block_simple}
    b \cdot B^t + \log \binom{\Delta + B^t}{B^t - 1} + O\bk*{w + w \cdot \Delta / 2^{2h}}.\numberthis
\end{align*}

\subsection{Performance of the Data Structure}

In this subsection, we will combine all parts of our construction to get the basic data structure, and check its time complexity, lookup table size, and redundancy to complete the proof of \cref{thm:simple_FID}.

\paragraph*{Time complexity.} Recall that our query algorithm will first query the inter-block data structure using $O(\log \log U)$ time to locate a block to query, and then query one intra-block data structure within $O(t \log \log U)$ time to get the answer. The time complexity of the whole data structure is $O(t\log \log U)$ per operation.

\paragraph*{Lookup table size.}
The only lookup table used by our data structure occurs in the mid part within each block, which is used for aB-trees due to \cref{lm:succincter}. As computed before, the size of this lookup table is $O(U^{10\eps})$ words, which is shared between all blocks, as desired.

\paragraph*{Redundancy.} 
The redundancy of the whole FID consists of the following two parts:
\begin{itemize}
\item The entire inter-block data structure is regarded as redundant. To store the endpoint sequence, we need to store $n/B^t + 1$ keys from $[U]$, which occupies $O(w \cdot n/B^t + w)$ bits. To store the predecessor data structure of the endpoint sequence, we also need at most $O(w\cdot n/B^t + w)$ bits according to \cref{lm:predecessor}.
  Recalling that $B \log B = \frac{\eps \log U}{t}$, we have $B \ge \sqrt{B \log B} = \sqrt{\frac{\eps \log U}{t}}$, and hence
  \begin{align*}
    O\bk*{w \cdot n/B^t} \le O\bk*{\frac{n \log U}{(\eps\log U / t)^{t/2}}}
    &\le O\bk*{\frac{n\log U}{(\log U/ t)^{t/4}}} \le O\bk*{\frac{n}{(\log U/ t)^{t/8}}},%
    \footnotemark \\
    &\le R \defeq \max\BK*{\frac{n}{(\log U / t)^{\Omega(t)}}, \, O(\log U)}. \numberthis \label{eq:less-than-R}
  \end{align*}
  \footnotetext{We need to check that $\log U \le \bk{\log U / t}^{t/8}$ for any $100 \le t \le \log U / \log \log U$ and sufficient large $U$. The derivative of the right-hand side equals, ignoring a positive multiplicative factor, $\log \log U - \log t - 1/\ln 2 \ge \log \log \log U - 1/ \ln 2  > 0$, which means $\bk{\log U / t}^{t/8}$ reaches its minimum at $t = 100$ where we have $\bk{\log U / t}^{t/8} \ge \bk{\log U / 100}^{100/8} > \log U$.}
  thus this part is covered by the desired amount of redundancy.
\item The redundancy caused by intra-block data structures, which we calculate below.
\end{itemize}
Recall that \eqref{eq:space_block_simple} upper bounds the space usage of each intra-block data structure. Taking a summation of \eqref{eq:space_block_simple} over all blocks, we get the total space consumption of the intra-block data structures:
\begin{align*}
    &\sum_{k=1}^{n/B^t} \bk*{b \cdot B^t + \log \binom{\Delta_k + B^t}{B^t - 1} + O\bk*{w + w \cdot \Delta_k / 2^{2h}}}\\
    {}={}& nb + \sum_{k=1}^{n/B^t} \log \binom{\Delta_k + B^t}{B^t - 1} + O\bk*{w + \frac{nw}{B^t} + \frac{w}{2^{2h}} \cdot \sum_{k=1}^{n/B^t} \Delta_k }\\
    {}\le{}& nb + \sum_{k=1}^{n/B^t} \log \binom{\Delta_k + B^t}{B^t - 1} + O\bk*{w + \frac{nw}{B^t} + \frac{nw}{2^{h}}}\\
    {}\le{}& nb + \sum_{k=1}^{n/B^t} \log \binom{\Delta_k + B^t}{B^t - 1} + O\bk*{R},\label{ineq:sum_of_block_total}\numberthis
\end{align*}
where the first inequality is due to \eqref{ineq:sum_of_delta_k}; the second inequality is because $h = t \log B$, thus $O(nw/2^{h}) = O(nw/B^t) \le O(R)$ according to \eqref{eq:less-than-R}.
We further have
\begin{align*}
    \label{ineq:sum_of_block_vandermonde}
    &\sum_{k=1}^{n/B^t} \log \binom{\Delta_k + B^t}{B^t - 1}
    {}={} \log \bk*{\prod_{k=1}^{n/B^t} \binom{\Delta_k + B^t}{B^t - 1}}
    {}\le{} \log \binom{\sum_{k=1}^{n/B^t} \Delta_k + n}{n - n/B^t}
    {}\le{} \log \binom{2^h n + n}{n}.\numberthis
\end{align*}
Comparing this quantity with the information-theoretic lower bound $\log \binom{U}{n}$ of FID, we get
\begin{align*}
    & \log \binom{U}{n} - \log \binom{2^h n + n}{n}
    {}={} \log \bk*{\frac{U}{2^h n + n} \cdot \frac{U - 1}{2^h n + n -1} \cdot \cdots \cdot \frac{U-n+1}{2^h n + 1}}\\
    {}\ge{}& n\log \bk*{\frac{U}{2^h n + n}}
    {}={} n\bk*{\log \frac{U}{2^h n} - \log \frac{2^h + 1}{2^h}}
    {}\ge{} nb - \frac{n}{2^h \ln 2},\label{ineq:compare_entropy}\numberthis
\end{align*}
where the last inequality is because $b \defeq \log \frac{U}{n} - h$ and $\log \bk{1 + 2^{-h}} \le 2^{-h} / \ln 2$. By plugging \eqref{ineq:sum_of_block_vandermonde} and \eqref{ineq:compare_entropy} into \eqref{ineq:sum_of_block_total}, the total space usage of all intra-block data structures is at most
\begin{align*}
    nb + \log \binom{2^h n + n}{n} + O\bk*{R}\le \log \binom{U}{n} + O\bk*{R + \frac{n}{2^h}} 
    = \log \binom{U}{n} + O(R),    
\end{align*}
where $O(n/2^{h}) = O(n/B^t)$ is covered by $R$ according to \eqref{eq:less-than-R}.

\smallskip

In conclusion, the total redundancy of our data structure for FIDs is bounded by $R$. Since we have constructed an FID with time complexity $O(t\log \log U)$, lookup table size $O(U^{10 \eps})$, and redundancy $R = \max \BK{n/(\log U/t)^{\Omega(t)}, O(\log U)}$, \cref{thm:simple_FID} follows.

\section{Advanced Data Structure for FIDs}
\label{sec:advanced}

In this section, we will prove \cref{thmt@@thmMainFID} by improving the basic data structure. 
\thmMainFID*

\begin{proof}
  We start by reviewing the bottleneck of \cref{thm:simple_FID}.
  In the basic data structure, the time bottleneck is that \rank queries take $O(t \log \log U)$ time in the low-part step. Specifically, suppose we are required to answer the query $\rank(x)$. The inter-block data structure, together with the high and mid parts of the intra-block data structure, uses $O(t + \log \log U)$ time to locate the desired index $i$ within a maximal interval $\Bk{i'_1, i'_2}$ of $\bk{\delta_1, \ldots, \delta_{B^t}}$.
Then, in the case where the low-part step is required, we need to further determine the maximum index $i \in [i'_1, i'_2]$ such that $\xlow \le \vlow \defeq (x \bmod 2^b)$. This can be formulated as a \rank query for the increasing sequence $\bk[\big]{\xlow[i'_1], \ldots, \xlow[i'_2]}$:
\begin{itemize}
\item $\rank(\vlow)$: Return the largest index $i \in [i'_1, i'_2]$, such that $\xlow \le \vlow$.
\end{itemize}
In the basic data structure, we directly store the sequence $\bk[\big]{\xlow[i'_1], \ldots, \xlow[i'_2]}$ as a sorted array using $bL$ bits and support \rank queries by binary search within $O\bk{\log L}$ time, where $L \defeq i'_2 - i'_1 + 1$ is the length of the subsequence. However, in the worst case, $L$ can be as large as the block size $B^t$, which makes the time of the binary search $O\bk{\log B^t} = O(t \log \log U)$. To address this bottleneck, we change the storage structure of the low part below.

\paragraph*{New construction for the low part.}
The key observation is that storing the subsequence $\bk[\big]{\xlow[i'_1], \ldots, \xlow[i'_2]}$ to support \rank and \partialsum queries can be viewed as managing a smaller FID.
Hence, we have the following \defn{unified data structure} to store it:
\begin{itemize}
\item If $L \le \Lthrd \defeq \log U$,
  then we still store the subsequence as a sorted array within $bL$ bits.
\item Otherwise, we will use the basic data structure in \cref{thm:simple_FID} (as a subroutine)
  to store the subsequence, with parameter $t' = O(1)$ and constant $\eps$ to be determined.%
  \footnote{\cref{thm:simple_FID} requires that the universe size $2^b$ of the basic data structure is at least a large polynomial of the number of keys $L \le B^t$. This condition holds because, on one hand, our choice of $B$ (i.e., $B \log B = (\eps \log U) / t$) implies $L \le B^t = U^{o(1)}$; on the other hand, $2^b = (U / n) / 2^h = (U / n) / B^t = \poly U$.}
  As $\BK[\big]{\xlow[i'_1], \ldots, \xlow[i'_2]} \subseteq [2^b]$ is a set of $L$ elements,
  the basic data structure will have query time $O(t' \log \log U) = O(\log \log U)$, and will use
  \begin{align*}
    \log \binom{2^b}{L} + \frac{L}{\bk{\log 2^b}^{\Omega \bk{1}}} + O(\log 2^b)
    \le L\log {\frac{e 2^b}{L}} + O(L + b)
    = Lb - L \log L + O(L + b)
    \le Lb
  \end{align*}
  bits of space, where the last inequality uses the fact $L > \Lthrd = \log U$ to ensure that $L\log L $ is asymptotically larger than $O(L)$ and $O(b) \defeq O(\log (U/n) - h) \le O(\log U)$.
  Hence, in this case, the basic data structure storing the subsequence also fits in $bL$ bits. We \emph{pad} 0's to the end of the encoding of the basic data structure until it occupies exactly $bL$ bits, ensuring memory alignment.
\end{itemize}
Clearly, this unified data structure allows us to answer $\rank$ and $\select$ queries of the subsequence $\bk[\big]{\xlow[1], \ldots, \xlow[B^t]}$ within $O(\log \log U)$ time:
\begin{itemize}
\item If $L \le \Lthrd$, then the subsequence is stored as a sorted array. For \rank queries, we can use binary search to get the desired index, which takes $O(\log L) \le O(\log \Lthrd) = O(\log \log U)$ time; for \select queries, we can directly read out the desired $\xlow$ within $O(1)$ time.
\item Otherwise, the subsequence is stored using the basic data structure in \cref{thm:simple_FID}, which can support \rank and \select queries within $O(\log \log U)$ time.
\end{itemize}

Further, we can store the entire low part using the above unified data structure for subsequences.
Specifically, we first divide the interval $[1, B^t]$ into all maximal intervals of $\bk{\delta_1, \ldots, \delta_{B^t}}$. For each maximal interval $[i'_1, i'_2]$ of length $L$, we store the corresponding subsequence of the low part $\bk[\big]{\xlow[i'_1], \ldots, \xlow[i'_2]}$ using the unified data structure in $bL$ bits. Finally, we concatenate all these unified data structures from the leftmost maximal interval to the rightmost one, obtaining a string of $b B^t$ bits, which serves as the encoding of the entire low part.

\smallskip

Finally, to obtain the advanced data structure, we start with the basic data structure and replace the encoding of the low part with the new construction above (the concatenation of unified data structures).
As our new construction for the low part occupies the same space ($bB^t$ bits) as before, the redundancy of this data structure is still $R = \max\BK{n/(\log U/t)^{\Omega(t)}, \, O(\log U)} = n/(\log U/t)^{\Omega(t)}$.
What makes the advanced data structure special is that the ``basic data structure'' appears twice here, once as the entire framework and once as the subroutines for maximal intervals in the low part. By embedding small instances of basic data structures within a larger framework of the same basic data structure in a non-recursive manner, we can improve the query time of FIDs without introducing any additional redundancy, as introduced below.

\paragraph*{Query algorithms.}

The query algorithms for our advanced data structure are similar to those of the basic data structure in \cref{sec:simple}: We first use the inter-block information to transform the original queries on the entire FID into \rank/\partialsum queries within each block. Then, a three-stage process involving the high, mid, and low parts will obtain the answer to the query. The only difference is that, before we access anything in the low part, we need to first compute the maximal interval of $(\delta_1, \ldots, \delta_{B^t})$ that we plan to access; by comparing the length of the interval with the threshold $\Lthrd$, we determine if the unified data structure for that maximal interval is stored as a sorted array or a basic data structure. The algorithms to answer \rank/\partialsum queries within each block are explained below.

\begin{itemize}
\item For the query $\rank(x)$, using the high and mid parts of the data structure, we can either answer the query directly or locate the desired index $i$ within a maximal interval $[i'_1, i'_2]$ of $\bk{\delta_1, \ldots, \delta_{B^t}}$. In the latter case, we read the encoding of the unified data structure for $\bk[\big]{\xlow[i'_1], \ldots, \xlow[i'_2]}$, which resides in the $\bk{(i'_1 -1)b+1}$-th bit to the $i'_2 b$-th bit in the encoding of the entire low part.
  Recall that comparing $i'_1 - i'_2 + 1$ with $\Lthrd$ will tell us whether the unified data structure is stored as a sorted array or a basic data structure for FIDs.
  After that, we perform the query $\rank(\vlow) = \rank(x \bmod 2^b)$ on the unified data structure to obtain the desired index $i$ within $O(\log \log U)$ time.
\item For the query $\partialsum(i)$, we first get $\delta_{\le i}$ by the mid and high parts. Then, instead of directly reading out $\xlow$ from the low part as before, now we also need to know the maximal interval $[i'_1, i'_2]$ of $\bk{\delta_1, \ldots, \delta_{B^t}}$ containing $i$ to help us access the low part. We use a similar algorithm to the \rank query, to compute this maximal interval $[i'_1, i'_2]$ with respect to $\delta_{\le i}$, and to extract the encoding of the subsequence $\bk[\big]{\xlow[i'_1], \ldots, \xlow[i'_2]}$. By querying $\select(i - i'_1 + 1)$ on the unified data structure of this subsequence, we get $\xlow$ in $O(\log \log U)$ time.
\end{itemize}
Both types of queries take $O(t + \log \log U)$ time, because the aB-trees in the mid part take $O(t)$ time, while other steps (including the high and low parts and the inter-block data structure) take $O(\log \log U)$ time per query. This meets the requirement in \cref{thm:main-fid}.

\paragraph*{Space of the lookup table.}
Finally, we check that the size of the lookup table introduced by \cref{thm:simple_FID} is also dominated by $R$. Recall that the lookup table consists of $O(U^{10\eps})$ words. As $U = n^{1+ \Theta(1)}$, we can assume there is a constant $\alpha > 1$ such that $U \le n^{\alpha}$. Then, we can set $\eps = 1/(20\alpha)$, which means that the number of \emph{bits} in the lookup table is $O(U^{1/(2 \alpha)} \log U) = O(n^{1/2} \log n)$, which is significantly smaller than $R = n/(\log U/t)^{\Omega(t)}$, as desired.

In summary, we get a data structure for static FID with query time $O(t + \log \log U)$ and redundancy $R = n/(\log U/t)^{\Omega(t)}$, which concludes the proof of \cref{thmt@@thmMainFID}.
\end{proof}

\section{Select and Partial Sum}
\label{sec:select}

In this section, we prove \cref{thm:select,thm:partialsum} by adjusting the \emph{basic data structure} introduced in \cref{sec:simple}. We will rely on the predecessor data structure from \cite{patrascu2006timespace} when the set to store is relatively dense:

\begin{lemma}[Similar to \cref{lm:predecessor}, see \cite{patrascu2006timespace}]
  \label{lm:pred-dense}
  For $U \le n \log^{t} n$, there is a predecessor data structure with associated values that uses $O\bk[\big]{n \log U + n \log V}$ bits of space and answers queries in $O(\log t)$ time.
\end{lemma}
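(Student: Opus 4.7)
The plan is to mirror the three-component construction in the proof of \cref{lm:predecessor}, swapping out the general-purpose predecessor subroutine for the sharper dense-regime variant of \cite{patrascu2006timespace}. Concretely, I would maintain (i) a predecessor data structure on $S \subset [U]$, (ii) an analogous successor data structure, and (iii) a perfect hash table on $S$ in which each key stores its associated value in $[V]$. Items (ii) and (iii) carry over verbatim from \cref{lm:predecessor}: the successor structure is symmetric to the predecessor structure, and the perfect hashing of \cite{fredman1984storing, cormen2022introduction} uses $O(n(\log U + \log V))$ bits with $O(1)$-time lookup, which is enough to retrieve the associated value once the predecessor/successor key has been identified.

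The real work is item (i). I would invoke the near-linear-space point of \cite{patrascu2006timespace}'s trade-off curve, which yields a predecessor data structure using $O(n \log U)$ bits whose query time, in the word-RAM with $w = \Theta(\log U)$, is $O\bk{\log \log(U/n) / \log \log \log U}$. Under the hypothesis $U \le n \log^t n$ one has $\log(U/n) \le t \log \log n$, and therefore
\[
\log \log(U/n) \;\le\; \log t + \log \log \log n \;=\; O(\log t + \log \log \log U).
\]
Dividing by $\log \log \log U = \Theta(\log \log \log n)$ gives a query time of $O(\log t)$ (with the usual convention that the bound is read as $O(\log t + 1)$ for very small $t$), as claimed. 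Adding the $O(\log V)$-bit associated value per key via perfect hashing and summing space yields the advertised $O(n \log U + n \log V)$ bits.

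The main obstacle is simply invoking the correct point on the Pătraşcu--Thorup curve and verifying the arithmetic: one needs the variant of their trade-off in which the cell-probe complexity is governed by $\log \log(U/n)$ and the word-parallelism factor $\log \log \log U$, rather than a coarser $O(\log \log U)$ van Emde Boas / Y-fast-trie bound, because the latter would only give $O(\log \log n + \log t)$ and would not collapse to $O(\log t)$ in the intended regime. Everything else is a routine reassembly of the three building blocks exactly as in the proof of \cref{lm:predecessor}.
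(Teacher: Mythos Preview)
The paper does not actually give a proof of this lemma; it simply states it as a citation to \cite{patrascu2006timespace}, in the spirit of \cref{lm:predecessor}. Your three-component reassembly is exactly the intended reading, and the conclusion is correct.

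One correction, though: the formula you quote for the \Patrascu--Thorup linear-space bound is not quite right. The relevant branch of their trade-off (with $a = \Theta(\log w)$ extra bits per key) is
\[
O\!\left(\log\frac{\log(U/n)}{\log w}\right)
\;=\;
O\bigl(\log\log(U/n) - \log\log\log U\bigr),
\]
i.e., a logarithm of a ratio, not the ratio $\log\log(U/n)/\log\log\log U$ that you wrote. Plugging in $\log(U/n)\le t\log\log n$ and $\log w=\Theta(\log\log U)$ gives $\log\bigl(t\log\log n/\log\log U\bigr)=O(\log t)$ directly, with no division step needed. Your version happens to produce an even smaller quantity, so the final bound survives, but the intermediate claim about what \cite{patrascu2006timespace} actually proves should be fixed.
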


We follow the same notations in \cref{sec:simple} in the following proofs.

\subsection{Select Dictionaries}

\thmSelect*

\begin{proof}
  Recall that we have divided the binary representations of keys into the \emph{high, mid, and low parts}, and in the high part, for each block of $B^t$ keys, we stored a predecessor data structure (\cref{lm:predecessor}) to compute $\deltahigh[\le i]$, which takes $O(\log \log n)$ time to answer each predecessor query. This is the only step exceeding $O(t)$ time in the process of answering \select queries.

  Instead of storing predecessor data structures for each block separately, here we store one large predecessor data structure for all $n$ keys with nonzero $\deltahigh$'s. It achieves the same functionality of computing $\deltahigh[\le i]$. The number of elements with nonzero $\deltahigh$'s is bounded by $n / 2^h = n / B^t$, thus the predecessor data structure stores $n / B^t \ge n / \log^t n$ elements from the range $[1, n]$. (If the number of nonzero $\deltahigh$'s is smaller than $n / B^t$, we add dummy elements until there are $n / B^t$ elements.) According to \cref{lm:pred-dense}, the predecessor data structure takes $O(\log t)$ time to answer each query, and takes $n \log n / B^t = n / (\log n / t)^{\Omega(t)}$ bits of space, which fits in our desired redundancy. Other parts of the data structure remain the same as in \cref{sec:simple}.

  When we perform a \select query, the above predecessor data structure in the high part will compute the prefix sum of the high part of the difference sequence, which takes $O(\log t)$ time per query; the aB-trees in the mid part takes $O(t)$ time to return the prefix sum of the mid part (within each block); finally, the low part reads $\xlow[i]$ directly to obtain the low part of the target key. The entire process takes $O(t)$ time.
\end{proof}

\subsection{Partial Sum on Integer Sequences}

\thmPartialSum*

\begin{proof}
    Let $x_i \defeq \sum_{j = 1}^i a_i$ be the partial-sum sequence of the input. The partial-sum problem is equivalent to storing a (multi-)set of keys $x_1 \le x_2 \le \cdots \le x_n$ supporting \select queries, i.e., a \emph{select dictionary}. The only distinction is that the difference $x_i - x_{i-1}$ between any two adjacent keys is bounded by $2^\l - 1$ in this problem. The data structure we design for partial-sum is similar to that of the select dictionaries, except that we adjust the parameters and change the number of bits in the \emph{high, mid, and low parts}:
    \begin{itemize}
    \item There is no high part.
    \item Let $B$ be a parameter such that $B \log B = \frac{\eps \log n}{t}$ for a small constant $\eps$, and let $h \defeq t \log B$. We call the $(\l - h)$ least significant bits of each $x_i$ the \emph{low part}, and store these bits directly using an array.
    \item The remaining bits $\floor{x_i / 2^{\l - h}}$ are called the \emph{mid part}. In their difference sequence $\delta_1, \ldots, \delta_n$ where $\delta_i \defeq \floor{x_i / 2^{\l - h}} - \floor{x_{i - 1} / 2^{\l - h}}$, each entry $\delta_i$ equals either $\floor{a_i / 2^{\l - h}}$ (i.e., the $h$ most significant bits of the input entry $a_i$) or $\floor{a_i / 2^{\l - h}} + 1$, and thus is in $[0, 2^h]$. Same as in \cref{sec:simple}, we divide $\delta_1, \ldots, \delta_n$ into blocks of size $B^t$ and use aB-trees to store them, supporting prefix-sum queries on $\delta_1, \ldots, \delta_n$.
    \end{itemize}

    Recall that $\mathcal{N}(B^t, \phi)$ represents the number of instances for an aB-tree with size $B^t$ and root label $\phi$ (i.e., the sum of entries in the aB-tree equals $\phi$), which is bounded by $(2^h + 1)^{B^t}$. The space usage of the mid part is thus
    \[
    \log \mathcal{N}(B^t, \phi) + 2 + O(\log n) \le B^t \log (2^h + 1) + O(\log n) \le B^t \cdot h + O(B^t / 2^h + \log n) = B^t \cdot h + O(\log n)
    \]
    bits per block, where $\log \mathcal{N}(B^t, \phi) + 2$ is the space usage of the aB-tree, and $O(\log n)$ is the space to store the root label $\phi$ of the aB-tree. Taking a summation of the space usage over all blocks, including the $n \cdot (\l - h)$ bits taken by the array in the low part, the inter-block information, and the lookup tables, we know the total space occupied by the data structure is at most
    \[
    n (\l - h) + \frac{n}{B^t} \cdot \bk*{
    B^t \cdot h + O(\log n)
    } + n^{0.1}
    \le n (\l - h) + n \cdot h + O(n \log n / B^t)
    \le n \l + n / (\log n / t)^{\Omega(t)}
    \]
    bits, as desired. Similar to \cref{thm:select}, each query takes $O(t)$ time.
\end{proof}

\section*{Acknowledgement}

The authors thank William Kuszmaul and Huacheng Yu for helpful suggestions on paper writing, and anonymous reviewers for pointing out important related works.

\bibliographystyle{alpha}
\bibliography{reference.bib}

\end{document}